\documentclass[lettersize,onecolumn]{IEEEtran}
\usepackage{amsmath,amsfonts}
\usepackage{algorithmic}
\usepackage{algorithm}
\usepackage{array}
\usepackage[caption=false,font=normalsize,labelfont=sf,textfont=sf]{subfig}
\usepackage{textcomp}
\usepackage{stfloats}
\usepackage{url}
\usepackage{verbatim}
\usepackage{graphicx}
\usepackage{cite}
\usepackage{newtxtext}
\usepackage{amssymb, amsmath,threeparttable, bbding, url,xcolor}
\usepackage{booktabs}
\usepackage{circuitikz}
\usepackage{subcaption}
\usepackage{ulem}

\newtheorem{theorem}{Theorem}[section]
\newtheorem{lemma}[theorem]{Lemma}

\newtheorem{example}[theorem]{Example}

\newtheorem{corollary}[theorem]{Corollary}

\newtheorem{definition}[theorem]{Definition}

\ctikzset{logic ports=ieee}

\newcommand{\1}{\mathbf{1}}

\newcommand{\C}{{\mathcal{C}}}
\newcommand{\bF}{{\mathbb F}}
\newcommand{\tabincell}[2]{\begin{tabular}{@{}#1@{}}#2\end{tabular}}

\begin{document}

\title{A Structural Characterization of Cyclotomic Cosets with Applications to Affine-Invariant Codes and BCH Codes}

\author{ Xiongkun~Zheng,\,\, 
         Dabin~Zheng,\,\,
         Xiaoqiang~Wang,\,\,
         Mu~Yuan\,\,
\thanks{ X. Zheng, D. Zheng, X. Wang and M. Yuan are with Hubei Key Laboratory of Applied Mathematics, Faculty of Mathematics and Statistics, Hubei University. Dabin Zheng is also with Key Laboratory of Intelligent Sensing System and Security (Hubei University), Ministry of Education, Wuhan, 430062, China. E-mail: 202321104011243@stu.hubu.edu.cn; dzheng@hubu.edu.cn; waxiqq@163.com; yuanmu847566@outlook.com. The corresponding author is Dabin Zheng. The research was supported in part by the National Nature Science Foundation of China (NSFC) under Grants 62272148, 12301671.
}
}

\markboth{IEEE IT,~Vol.~, No.~, }%
{Shell \MakeLowercase{\textit{et al.}}:  A Structural Characterization of Cyclotomic Cosets with Applications to Affine-Invariant Codes and BCH Codes}

\maketitle

\begin{abstract}
Affine-invariant codes have attracted considerable attention due to their rich algebraic structure and strong theoretical properties. In this paper, we study a family of affine-invariant codes whose defining set consists of all descendants of elements in the cyclotomic coset of a single specified element. Our main contributions are as follows. First, we establish a new combinatorial result that determines exactly the size of such descendant sets, which is of independent interest in the study of cyclotomic cosets. Second, using this result, we derive explicit formulas for the dimensions of the corresponding affine-invariant codes and their associated cyclic codes, and we establish lower bounds on the minimum distances of their duals. In particular, under appropriate parameter choices, these codes yield narrow-sense primitive BCH codes and their extended counterparts. For the special class of narrow-sense primitive BCH codes with designed distance $\delta = (b+1)q^{m-t-1}$, where $1 \leq b \leq q-1$ and $0 \leq t \leq m-1$, we provide exact dimension formulas and an improved lower bound on the minimum distance. The results presented here extend and sharpen several previously known results, and provide refined tools for the parametric analysis of BCH codes and their duals.
\end{abstract}

\begin{IEEEkeywords}
Affine-invariant code,\ BCH code,\ dimension,\ cyclotomic coset, \ minimum distance.
\end{IEEEkeywords}

\section{Introduction}\label{sec:intro}

Affine-invariant codes, introduced by Kasami, Lin, and Peterson~\cite{Kasami1967} as a natural generalization of Reed--Muller codes, form an important class of linear codes characterized by their invariance under affine permutations of coordinate positions. Owing to their rich algebraic structure and strong symmetry properties, affine-invariant codes have played a significant role in coding theory, combinatorics, and cryptography. The seminal work of Kasami et al.~\cite{Kasami1967} established a necessary and sufficient condition, formulated in terms of the combinatorial structure of the defining set, for an extended cyclic code to be affine-invariant. Subsequently, Delsarte~\cite{Delsarte1970} demonstrated that every linear $q$-ary code invariant under a suitable permutation group is equivalent to a cyclic code, and that any nontrivial linear code of length $q^{rm}$ over $\mathbb{F}_q$ invariant under the general affine group is equivalent to an extended cyclic code. Further structural properties, including the group-code characterization and automorphism groups of affine-invariant codes and related extended BCH codes, were investigated in~\cite{Berger1996,BergerCharpin1996,BergerCharpin1999,Bernaletal2011}.

More recently, Ding and Tang~\cite{Ding2018} uncovered a remarkable connection between affine-invariant codes and combinatorial $2$-designs by showing that the supports of codewords with fixed weights naturally form $2$-designs. This important discovery has stimulated extensive research activity, leading to the construction of several infinite families of combinatorial designs derived from affine-invariant codes~\cite{DingLiXia2018,DingTangTonChev2020,DuWangFan2020}. These developments further highlight the fundamental importance of affine-invariant codes and deepen their connections with algebraic combinatorics. For a comprehensive account of affine-invariant codes and their applications, the reader is referred to~\cite{Hou2005,XuJiTaoHu2023,XuTaoHu2025} and the references therein.

Cyclotomic cosets constitute another fundamental tool in coding theory and finite field theory. The $q$-cyclotomic cosets modulo $n$ completely determine the irreducible factorization of $X^n - 1$ over $\mathbb{F}_q$, thereby playing a central role in the construction and analysis of cyclic codes. In addition, cyclotomic cosets arise naturally in the study of Gaussian sums, exponential sums, and related problems in number theory and combinatorics. A precise understanding of their structural properties, including coset representatives, leaders, sizes, and enumerations, is essential in many applications. In particular, for cyclic codes and affine-invariant codes, the cyclotomic coset structure of the defining set directly determines key code parameters such as the dimension and also affects bounds on the minimum distance. Consequently, the investigation of cyclotomic cosets is indispensable for both theoretical advances and practical constructions in coding theory.

Despite extensive research over several decades, determining the precise parameters of affine-invariant codes remains a challenging problem, especially for families defined via descendant structures of cyclotomic cosets. In an earlier work, Levy-dit-Vehel~\cite{Levy1995} studied a class of affine-invariant codes whose defining set consists of all descendants (see Definition~\ref{def:partialorder}) of elements belonging to the cyclotomic coset of a single specified element. By employing the generalized Roos bound, lower bounds on the minimum distances of the dual codes were obtained. However, due to the intricate combinatorial structure of the defining sets, explicit formulas for the code dimensions were not derived. More recently, Li et al.~\cite{ChengJuLi2025} investigated affine-invariant codes defined by descendants of elements from multiple cyclotomic cosets. While the dimensions of such codes can be determined relatively easily, obtaining tight minimum-distance bounds for these codes or their duals remains difficult. These limitations indicate that a deeper structural understanding of cyclotomic cosets and their descendant sets is crucial for advancing the theory of affine-invariant codes.

Motivated by these challenges, this paper investigates a class of affine-invariant codes $\C_T$ whose defining set $T$ consists of all descendants of elements in the cyclotomic coset of a single specified element. This framework generalizes the class considered in~\cite{Ding2015-1} and~\cite{Mann1962}. Our approach is based on a detailed combinatorial and structural analysis of cyclotomic cosets and their descendant sets. As a key step, we establish a new structural result that enables us to determine exactly the size of the defining set $T$. This structural result is of independent interest and provides new insights into the combinatorial properties of cyclotomic cosets.

Building on this structural result, we derive exact dimension formulas for the affine-invariant codes $\C_T$ and their associated cyclic codes $\C_T^*$. Furthermore, we establish lower bounds on the minimum distances of their dual codes. As an important application, we show that, for suitable parameter choices, the constructed affine-invariant codes and their cyclic counterparts include families of narrow-sense primitive BCH codes and extended narrow-sense primitive BCH codes. For these families, we obtain exact dimension formulas and improved lower bounds on the minimum distances. Our results not only strengthen and generalize but also unify several previously known results in~\cite{Ding2015-1,CLi2022,Levy1995,Mann1962,Wang2024}, and they provide a systematic framework for analyzing affine-invariant codes defined via cyclotomic cosets and descendant structures.

The remainder of this paper is organized as follows. Section~\ref{sec:preli} presents the necessary preliminaries. Section~\ref{sec:newresult} establishes the main structural result on cyclotomic cosets. Section~\ref{sec:dimensionofcodes} determines the dimensions of the proposed affine-invariant codes and their associated cyclic codes. Section~\ref{sec:distanceofcodes} derives lower bounds on the minimum distances of the dual codes. Finally, Section~\ref{sec:conclusion} concludes the paper and discusses possible directions for future research.

\section{Preliminaries}\label{sec:preli}

In this section, we introduce notation and recall several fundamental concepts concerning cyclotomic cosets, group algebras, cyclic codes, and affine-invariant codes that will be used throughout the paper.

\subsection{Basic notation and cyclotomic cosets}

Unless otherwise specified, the following notation will be used throughout the paper.

\begin{itemize}
\item $\bF_q$ denotes the finite field with $q$ elements, where $q$ is a prime power, and $\bF_{q^m}$ denotes its extension field of degree $m$.

\item For integers $k$ and $\ell$, the interval $[k,\ell]$ denotes the set $\{k,k+1,\dots,\ell\}$ if $k\le \ell$, and the empty set otherwise. For convenience, we write $[\ell]$ to denote $[0,\ell]$.

\item The binomial coefficient is defined by
\[
\binom{\ell}{k}=
\begin{cases}
\dfrac{\ell!}{k!(\ell-k)!}, & 0\le k\le \ell,\\
0, & \text{otherwise}.
\end{cases}
\]
We adopt the convention that $0^0=1$.
\end{itemize}

Let $m$ be a positive integer and let $n=q^m-1$. For an integer $s$ with $1\le s\le n$, the \textit{$q$-cyclotomic coset} of $s$ modulo $n$ is defined by
\[
\mathrm{cl}(s)=\{\, s, sq, sq^2,\dots,sq^{\ell_s-1}\,\} \pmod n,
\]
where $\ell_s$ is the smallest positive integer such that
\[
s \equiv sq^{\ell_s} \pmod n.
\]
The integer $\ell_s$ is called the \textit{size} of the cyclotomic coset $\mathrm{cl}(s)$, and the smallest element in $\mathrm{cl}(s)$ is called the \textit{coset leader}.

Every integer $s$ with $0\le s\le q^m-1$ admits a unique $q$-adic expansion
\[
s=\sum_{i=0}^{m-1}s_i q^i,\quad 0\le s_i\le q-1.
\]
This representation naturally corresponds to the sequence $\bar{s}=s_0 s_1 \cdots s_{m-1}$.
Conversely, each sequence of length $m$ over $\{0,1,\dots,q-1\}$ uniquely determines an integer in $[n]$.

Multiplication by $q^j$ modulo $q^m-1$ corresponds to a cyclic shift of the $q$-adic representation. More precisely, for $1\le j\le m$,
$sq^j \pmod{q^m-1}$ corresponds to the circular right shift
\[
s_{m-j} s_{m-j+1}\cdots s_{m-1}s_0 s_1\cdots s_{m-j-1}.
\]
Throughout the paper, we identify an integer with its $q$-adic sequence representation whenever convenient.

\begin{definition}
A \textit{cyclic subsequence} of length $k$ of a sequence
\[
\bar{x}=x_0x_1\cdots x_{m-1}
\]
is any subsequence consisting of $k$ consecutive symbols, where the sequence is viewed cyclically, i.e., the last symbol $x_{m-1}$ is followed by $x_0$.
\end{definition}

\begin{definition}\label{def:partialorder}
Define a partial order $\ll$ on $[n]$ as follows. For $u,v\in [n]$ with $q$-adic representations
\[
\bar{u}=u_0u_1\cdots u_{m-1}, \quad
\bar{v}=v_0v_1\cdots v_{m-1},
\]
we write $\bar{u} \ll \bar{v}$ if and only if
\[
u_i \le v_i,\quad \text{for all } 0\le i\le m-1.
\]
In this case, $u$ is called a \textit{descendant} of $v$, and $v$ is called an \textit{ascendant} of $u$. Clearly, $\bar{u}\ll \bar{v}$ implies $u\le v$.
\end{definition}

\subsection{Group algebras}

Let $G=\bF_{q^m}$ and let $G^*=\bF_{q^m}^*$ denote its multiplicative group. Consider the group algebras
$\mathcal{M}=\bF_q[G^*,\times]$ and $\mathcal{A}=\bF_q[G,+]$. Any element $\mathbf{x}$ in $\mathcal{M}$ or $\mathcal{A}$ can be written in the form
\[
\mathbf{x}=\sum_{g\in G^*} x_g X^g
\quad \text{or} \quad
\mathbf{x}=\sum_{g\in G} x_g X^g,
\]
respectively, where $x_g\in \bF_q$. Scalar multiplication and addition are defined componentwise:
\[
\lambda \sum_g x_g X^g=\sum_g \lambda x_g X^g, \quad
\sum_g x_g X^g+\sum_g y_g X^g=\sum_g (x_g+y_g)X^g.
\]
Multiplication in $\mathcal{M}$ is defined by
\[
\left(\sum_{g\in G^*} x_g X^g\right)
\left(\sum_{g\in G^*} y_g X^g\right)
=
\sum_{g\in G^*}
\left(
\sum_{\substack{\alpha\beta=g\\ \alpha,\beta\in G^*}}
x_\alpha y_\beta
\right)
X^g,
\]
while multiplication in $\mathcal{A}$ is defined by
\[
\left(\sum_{g\in G} x_g X^g\right)
\left(\sum_{g\in G} y_g X^g\right)
=
\sum_{g\in G}
\left(
\sum_{\substack{\alpha+\beta=g\\ \alpha,\beta\in G}}
x_\alpha y_\beta
\right)
X^g.
\]

Thus, $\mathcal{M}$ and $\mathcal{A}$ are $\bF_q$-vector spaces equipped with convolution-type multiplications induced by the multiplicative and additive group structures, respectively.

\subsection{Cyclic codes and extended cyclic codes}

An $[n,k,d]$ linear code $\C$ over $\bF_q$ is a $k$-dimensional subspace of $\bF_q^n$ with minimum distance $d$. The dual code of $\C$ is defined by
\[
\C^\perp=
\left\{
\mathbf{b}\in \bF_q^n
\mid
\sum_{i=0}^{n-1} b_i c_i=0,\,
\forall \mathbf{c}\in \C
\right\}.
\]
A linear code $\C$ is said to be \textit{cyclic} if $(c_0,c_1,\dots ,c_{n-1})\in \C$ implies $(c_{n-1},c_0,\dots ,c_{n-2})\in \C$.

Let $n=q^m-1$ and let $\alpha$ be a primitive element of $\bF_{q^m}$. By identifying a vector $(c_{\alpha^i})_{i=0}^{n-1}$ with the element
$\sum_{i=0}^{n-1} c_{\alpha^i} X^{\alpha^i}\in \mathcal{M}$, primitive cyclic codes of length $n$ correspond precisely to ideals of $\mathcal{M}$.
Define the $\bF_q$-linear map $\rho_s : \mathcal{M}\rightarrow\bF_{q^m}$ by
\[
\rho_s\left(\sum_{g\in G^*} x_g X^g\right)
=\sum_{g\in G^*} x_g g^s,\quad 0\le s<n.
\]
It follows immediately that $\rho_{qs}(\mathbf{x}) =\left(\rho_s(\mathbf{x})\right)^q$ for any $\mathbf{x}\in \mathcal{M}$.

\begin{definition}
Let $\C_T^*$ be a cyclic code of length $n$ over $\bF_q$. The \textit{defining set} of $\C_T^*$ is the largest subset $T^*\subseteq [n]$ such that
\[
\rho_s(\mathbf{x})=0,\quad \forall s\in T^*,\ \forall \mathbf{x}\in \C_T^*.
\]
\end{definition}

It is well known that $T^*$ is a union of $q$-cyclotomic cosets modulo $n$. The cyclic code with defining set $T^*$ can be written as
\[
\C_T^*
=
\left\{
\mathbf{c}\in \mathcal{M}
\mid
\rho_s(\mathbf{c})=0,\ \forall s\in T^*
\right\}.
\]
In particular, if
\[
T^*=\bigcup_{1\le s<\delta}\mathrm{cl}(s),
\]
then $\C_T^*$ is called the \textit{narrow-sense primitive BCH code} with design distance $\delta$, denoted by $\C_\delta$.

A cyclic code $\C_T^*$ can be extended by adding an overall parity-check coordinate. The extended code is defined by
\[
\C_T=\left\{
\left(-\sum_{g\in G^*} c_g\right)X^0
+
\sum_{g\in G^*} c_g X^g
\;\middle|\;
\sum_{g\in G^*} c_g X^g \in \C_T^*
\right\}.
\]
Its defining set is $T=T^*\cup\{0\}$. Moreover,
\[
\dim_{\bF_q} \C_T
=
\dim_{\bF_q} \C_T^*.
\]

\subsection{Affine-invariant codes}

For $u,v\in G$ with $u\ne 0$, define the affine transformation $\sigma_{u,v} : \mathcal{A}\rightarrow \mathcal{A}$ by
\[
\sigma_{u,v}
\left(
\sum_{g\in G} c_g X^g
\right)
=
\sum_{g\in G} c_g X^{ug+v}.
\]
This transformation permutes the coordinates of $\mathcal{A}$. A code $\C\subseteq \mathcal{A}$ is said to be \textit{affine-invariant} if
\[
\sigma_{u,v}(\mathbf{c})\in \C, \quad \forall \mathbf{c}\in \C,\ \forall u\ne 0,\ v\in G.
\]

The following fundamental characterization is due to Kasami et al.

\begin{lemma}\cite{Kasami1967}
Let $\C_T$ be an extended cyclic code with defining set $T$. Define
\[
\Delta(T)
=
\bigcup_{t\in T}
\{
s\in [n]\mid \bar{s}\ll \bar{t}
\}.
\]
Then $\C_T$ is affine-invariant if and only if
\[
\Delta(T)=T.
\]
\end{lemma}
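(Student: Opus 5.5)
The plan is to prove the Kasami--Lin--Peterson criterion through the action of the affine group on the "moment" functionals $\rho_s$, extended to $\mathcal{A}$ by
\[
\rho_s\Big(\sum_{g\in G} c_gX^g\Big)=\sum_{g\in G}c_g g^s
\]
with $0^0=1$. With this convention $\rho_0(\mathbf{c})=\sum_g c_g$, and the parity condition of $\C_T$ is exactly $\rho_0=0$, so
\[
\C_T=\{\mathbf{c}\in\mathcal{A}\mid \rho_s(\mathbf{c})=0\ \forall s\in T\}.
\]
I would take $T$ to be the full defining set, closed under multiplication by $q$ modulo $n$, with $0\in T$. I would also use the standard fact (from the Vandermonde/BCH duality for cyclic codes) that $\rho_s$ vanishes on all of $\C_T$ if and only if $s\in T$. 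The affine group is generated by the multiplications $\sigma_{u,0}$ and the translations $\sigma_{1,v}$. Since $\rho_s(\sigma_{u,0}\mathbf{c})=u^s\rho_s(\mathbf{c})$, every extended cyclic code is invariant under the multiplications. Hence affine invariance is equivalent to translation invariance.

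For translations, expand $(g+v)^s$ using $s=\sum s_iq^i$:
\[
(g+v)^s=\prod_i\big(g^{q^i}+v^{q^i}\big)^{s_i}=\sum_{\bar t\ll\bar s}\Big(\prod_i\binom{s_i}{t_i}\Big) g^{t}v^{s-t}.
\]
By Lucas' theorem each coefficient $\prod_i\binom{s_i}{t_i}$ is nonzero in $\bF_q$ exactly when $\bar t\ll\bar s$. This gives
\[
\rho_s(\sigma_{1,v}\mathbf{c})=\sum_{\bar t\ll\bar s}\Big(\prod_i\binom{s_i}{t_i}\Big)v^{s-t}\rho_t(\mathbf{c}).
\]
One point needs care: the top value $s=n$, because $g^n$ and $g^0$ differ at $g=0$. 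I would check that the convention $0^0=1$ makes these identities hold.

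\emph{Sufficiency.} Assume $\Delta(T)=T$. For $s\in T$, every $t$ with $\bar t\ll\bar s$ lies in $T$, so every term on the right vanishes for $\mathbf{c}\in\C_T$. Thus $\sigma_{1,v}\mathbf{c}\in\C_T$.

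\emph{Necessity.} Assume $\C_T$ is affine-invariant, and fix $s\in T$ and $\bar t\ll\bar s$. Then the right-hand side above is $0$ for every $v\in G$. View it as a polynomial in $v$ of degree at most $n<q^m$, whose exponents $s-t$ are distinct for distinct $t$. Such a polynomial vanishing on all of $G$ must have all coefficients zero. Since the binomial coefficients are nonzero, this forces $\rho_t(\mathbf{c})=0$ for all $\mathbf{c}\in\C_T$, so $t\in T$.

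There is one exceptional exponent: $s-t=n$ (that is, $s=n$, $t=0$), where $v^n$ coincides with $v^0$ on $G^*$. I expect this boundary case, together with the $0^0$ conventions, to be the main technical obstacle. I would handle it separately using $0\in T$, so that the $t=0$ term is already zero and can be dropped before applying the polynomial argument.
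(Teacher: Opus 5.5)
The paper gives no proof of this lemma. It is quoted from Kasami--Lin--Peterson, and only the ``if'' direction is ever used: to see that $\C_T$ with $T=\bigcup_{\bar s\ll\bar u}\mathrm{cl}(s)$ is affine-invariant. Your reduction to translations and your sufficiency argument are correct for every prime power $q$. The expansion $(g+v)^s=\sum_{\bar t\ll\bar s}\prod_i\binom{s_i}{t_i}g^tv^{s-t}$ is an exact polynomial identity, and coefficients that happen to vanish do no harm there.

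The gap is in necessity, at the sentence ``By Lucas' theorem each coefficient $\prod_i\binom{s_i}{t_i}$ is nonzero in $\bF_q$ exactly when $\bar t\ll\bar s$.'' Lucas' theorem is about base-$p$ digits, where $p=\mathrm{char}\,\bF_q$. When $q=p^r$ with $r\ge 2$, the $q$-adic digits lie in $[0,q-1]$, and $\binom{s_i}{t_i}$ can vanish mod $p$ even though $t_i\le s_i$; for example $\binom{2}{1}=0$ in $\bF_4$. For such $t$ your polynomial-in-$v$ argument says nothing about $\rho_t$.

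This cannot be repaired, because the ``only if'' direction is false for the $q$-adic order. Take $q=4$, $m=1$, $T=\{0,2\}$. Then $\C_T=\{\mathbf c:\sum_g c_g=0,\ \sum_g c_g g^2=0\}$ is affine-invariant, since $(ug+v)^2=u^2g^2+v^2$. Moreover $T$ is its full defining set: $\rho_0,\rho_1,\rho_2$ are independent on the four points of $\bF_4$, so $\rho_1$ does not vanish on the $2$-dimensional code $\C_T$. Yet $1\ll 2$ and $1\notin T$.

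Your argument proves the genuine Kasami--Lin--Peterson criterion if you expand $(g+v)^s=\prod_j\bigl(g^{p^j}+v^{p^j}\bigr)^{\sigma_j}$ over the $p$-adic digits $\sigma_j$ of $s$. That criterion uses the $p$-adic partial order, and it agrees with the lemma as stated only when $q$ is prime. In that setting your proof is complete.

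The ``exceptional exponent'' you flag is not actually an issue. Since $v$ ranges over all of $G$, including $v=0$ where $0^n=0\neq 1=0^0$, the monomials $v^0,\dots,v^{q^m-1}$ are linearly independent as functions on $G$. So no separate treatment of $s=n$, $t=0$ is needed; your fix via $0\in T$ is harmless but unnecessary.
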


The following lemma relates the dual distances of cyclic codes and their affine-invariant extensions.

\begin{lemma}\cite{Charpin1994}\label{lem:linkd}
If the extended cyclic code $\C_T$ is affine-invariant, then
\[
d_{\min}(\C_T^\perp)=d_{\min}((\C_T^*)^\perp),
\]
where $d_{\min}(\C_T^\perp)$ denotes the minimum distance of $\C_T^{\perp}$.
\end{lemma}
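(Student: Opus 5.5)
The plan is to identify $(\C_T^*)^\perp$ with the code obtained from $\C_T^\perp$ by \emph{shortening} at the extra coordinate $X^0$. Then the equality of minimum distances should follow from the transitivity of the affine group on $G$.

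\textbf{Step 1.} By construction, $\C_T^*$ is exactly the code obtained from $\C_T$ by puncturing the coordinate $X^0$. This puncturing is injective, because the value $-\sum_{g\in G^*} c_g$ at $X^0$ is determined by the other coordinates. I will invoke the standard duality between puncturing and shortening: the dual of a punctured code is the shortened dual. Here this gives
\[
(\C_T^*)^\perp=\Bigl\{\,\sum_{g\in G^*} b_g X^g \;\Bigm|\; \sum_{g\in G} b_g X^g\in \C_T^\perp,\ b_0=0\Bigr\}.
\]
To verify it directly, I check both inclusions. First, a word of $\C_T^\perp$ with $b_0=0$, restricted to $G^*$, is orthogonal to every word of $\C_T^*$, since the $X^0$ term of the inner product vanishes. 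Second, $(\C_T^*)^\perp$ and the shortened code have the same dimension: both equal $n-\dim \C_T$, computed by a rank count on $\C_T$ of length $n+1$.

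\textbf{Step 2.} From Step 1, every nonzero word of $(\C_T^*)^\perp$ extends by a zero to a nonzero word of $\C_T^\perp$ of the same weight. Hence $d_{\min}((\C_T^*)^\perp)\ge d_{\min}(\C_T^\perp)$.

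\textbf{Step 3.} For the reverse inequality I use affine invariance. The dual of a code invariant under a permutation group is invariant under the same group, because the inner product is preserved by coordinate permutations. Hence $\C_T^\perp$ is affine-invariant. Take a minimum-weight word $\mathbf b\in\C_T^\perp$ and some $h\in G$ with $b_h=0$. Applying the translation $\sigma_{1,-h}$ gives a word of the same weight in $\C_T^\perp$ whose $X^0$-coordinate is zero. Deleting that coordinate yields, by Step 1, a word of $(\C_T^*)^\perp$ of weight $d_{\min}(\C_T^\perp)$, so equality holds.

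The main obstacle is the degenerate case in Step 3, where every minimum-weight word of $\C_T^\perp$ has full support $q^m$, so no zero coordinate exists to move to $X^0$. I would handle it separately. Since $0\in T$, the all-one vector lies in $\C_T^\perp$, and it has weight $q^m$. If $d_{\min}(\C_T^\perp)=q^m$, then $\C_T^\perp$ is the one-dimensional repetition code. In that case $(\C_T^*)^\perp=\{0\}$, and both sides are interpreted by the usual convention for trivial codes. Apart from this edge case, the key point is that Step 1 is proved carefully, using only the parity-check extension and no additional structure of $T$.
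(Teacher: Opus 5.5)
The paper does not prove this lemma; it is quoted from Charpin--Levy-dit-Vehel, so there is no internal argument to compare with. Your proof is correct and self-contained. Puncturing $\C_T$ at $X^0$ gives $\C_T^*$, so $(\C_T^*)^\perp$ is the shortening of $\C_T^\perp$ at $X^0$. That immediately gives $d_{\min}((\C_T^*)^\perp)\ge d_{\min}(\C_T^\perp)$. For the reverse inequality, the translation $\sigma_{1,-h}$ carries a zero coordinate of a minimum-weight word of $\C_T^\perp$ to $X^0$. Only this transitivity, via the translations $\sigma_{1,v}$, is used; the multiplicative part of the affine group plays no role.

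Two small remarks.

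In Step 1 the dimension count is unnecessary. As written it also tacitly needs a word of $\C_T^\perp$ with nonzero $X^0$-coordinate, for instance the all-one vector. The reverse inclusion is the same one-line computation: if $\mathbf{b}'\perp\C_T^*$, then $0\cdot X^0+\mathbf{b}'$ is orthogonal to every word of $\C_T$, because the $X^0$ term vanishes.

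The degenerate case you isolate is a genuine exception to the literal statement. It occurs exactly when $T=\{0\}$. Then $\C_T^\perp$ is the repetition code with minimum distance $q^m$, while $(\C_T^*)^\perp=\{\mathbf 0\}$. The equality survives only under the convention that the zero code of length $N$ has minimum distance $N+1$; it fails under the convention $d=\infty$. So ``the usual convention'' should be made explicit. This never matters in the paper. The first digit $u_0$ of $\bar u$ is $a$ or $b$, hence nonzero, so $1\ll u$ and $1\in T$. Then $\dim\C_T^\perp=|T|\ge 2$, and the Singleton bound forces $d_{\min}(\C_T^\perp)\le q^m-1$.
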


\section{A structural result on cyclotomic cosets}\label{sec:newresult}

It is well known that the size of a union of cyclotomic cosets plays a fundamental role in determining the dimension and minimum distance of cyclic codes, including BCH codes and affine-invariant codes. Therefore, obtaining an explicit characterization and enumeration of such unions is of both theoretical and practical importance.

Let $a,b,m,t$ be integers satisfying
\[
1 \le b \le a \le q-1, \quad m \ge 1, \quad 0 \le t \le m-1,
\]
and let $n=q^m-1$. Define the sequence
\[\bar{u}=\underbrace{a\cdots a}_{m-t-1} \, b \,\underbrace{0\cdots 0}_{t},
\]
and consider the set
\[T=\bigcup_{\bar{s}\ll \bar{u}} \mathrm{cl}(s),\]
where $\mathrm{cl}(s)$ denotes the $q$-cyclotomic coset of $s$ modulo $n$, namely,
\[
\mathrm{cl}(s)
=
\{s,sq,sq^2,\dots,sq^{\ell_s-1}\} \pmod n,
\]
and $\ell_s$ is the smallest positive integer such that $sq^{\ell_s}\equiv s \pmod n$. Equivalently, $\mathrm{cl}(s)$ is the orbit of $s$ under cyclic shifts of its $q$-adic expansion.
A special case of interest arises when $m=t+1$, and then $\bar{u}=b\underbrace{0\cdots0}_{m-1}$ (here $a$ may be chosen arbitrarily with $b\le a\le q-1$, since it does not affect the result).
The main objective of this section is to determine the exact size of the set $T$. To this end, we first establish a structural characterization of the elements of $T$ in terms of their $q$-adic expansions.

\begin{lemma}\label{lem:defT}
Let $\bar{u}=\underbrace{a\cdots a}_{m-t-1}\, b \,\underbrace{0\cdots0}_{t}$, $T=\bigcup_{\bar{s}\ll \bar{u}} \mathrm{cl}(s)$.
Then $T$ can be described as follows:
\[
T=\{0\}\cup
\left\{s\in [0,n]\;\middle|\;
\begin{array}{l}
\text{the $q$-adic expansion of $s$ contains a cyclic subsequence} \\[2pt]
\text{of the form}\,\, x\underbrace{0\cdots0}_{t},\,\, x\in [1,b],\,\, \text{or}\,\, y\underbrace{0\cdots0}_{t+1},\,\, y\in [b+1,a],\\[2pt]
\text{and all remaining digits are at most $a$.}
\end{array}
\right\}.
\]
\end{lemma}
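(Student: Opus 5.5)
The plan is to prove the two inclusions directly on $q$-adic sequences. Throughout we use that, for $1\le s\le n-1$, the coset $\mathrm{cl}(s)$ is exactly the set of integers whose sequences are cyclic shifts of $\bar s$. Let $R$ denote the right-hand side of the claimed equality. The element $0$ lies in $T$ (since $\bar 0\ll\bar u$) and in $R$ by definition, so only nonzero elements need care. Note also that $s=n$ (all digits equal to $q-1$) lies in neither set. It is not in $T$, because every nonzero $\bar s\ll\bar u$ has a zero digit and its shifts stay below $n$. It is not in $R$, because both patterns contain a zero digit. So we may work with $1\le s\le n-1$. The key observation is that membership in $R$ is invariant under cyclic shifts, since both the existence of a cyclic subsequence of a given form and the digit bound are shift-invariant. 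Hence it suffices to compare $R$ with the set of nonzero descendants of $u$, up to shifts.

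For $T\subseteq R$, let $0\neq s$ with $\bar s\ll\bar u$. Then $s_i\le a$ for $i\le m-t-2$, $s_{m-t-1}\le b$, and $s_i=0$ for $i\ge m-t$, so all digits are at most $a$. There are two cases.
\begin{itemize}
\item If $x:=s_{m-t-1}\ge 1$, then positions $m-t-1,\dots,m-1$ give the pattern $x\underbrace{0\cdots0}_{t}$ with $x\in[1,b]$.
\item If $s_{m-t-1}=0$, then $s_{m-t-1}=\cdots=s_{m-1}=0$. Since $s\ne0$, let $y=s_j$ be the nonzero digit with the largest index $j\le m-t-2$. It is followed by at least $t+1$ zeros, so $y\underbrace{0\cdots0}_{t+1}$ occurs, and hence so does $y\underbrace{0\cdots0}_{t}$. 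If $y\le b$ the first pattern occurs. Otherwise $y\in[b+1,a]$ and the second pattern occurs.
\end{itemize}
By shift-invariance, every element of $\mathrm{cl}(s)$ lies in $R$.

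For $R\subseteq T$, take $s\in R$ nonzero and cyclically shift $\bar s$ so that the pattern sits at the end of the sequence. Again there are two cases.
\begin{itemize}
\item In the case $x\underbrace{0\cdots0}_{t}$, shift so that $x$ is at position $m-t-1$ and the zeros occupy positions $m-t,\dots,m-1$. The remaining digits are at most $a$, so the shifted sequence $\bar s'$ satisfies $\bar s'\ll\bar u$.
\item In the case $y\underbrace{0\cdots0}_{t+1}$, the pattern has length $t+2\le m$, which forces $t\le m-2$. Shift so that $y$ is at position $m-t-2$ and the zeros occupy positions $m-t-1,\dots,m-1$. Then digit $m-t-1$ is $0\le b$, $y\le a$, and the other digits are at most $a$, so again $\bar s'\ll\bar u$.
\end{itemize}
In both cases $s\in\mathrm{cl}(s')\subseteq T$.

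The only delicate points are bookkeeping. First, the degenerate case $m=t+1$: here $\bar u=b\underbrace{0\cdots0}_{m-1}$, the second pattern cannot fit, and the "digits at most $a$" condition is vacuous because all remaining digits are zero. Second, one must check that the identification of $\mathrm{cl}(s)$ with cyclic shifts, and the endpoints $0$ and $n$, are handled consistently. The main step to get right is the choice of $y$ in the first inclusion: it must be the last nonzero digit before the trailing zero block, so that it is genuinely followed by at least $t+1$ zeros.
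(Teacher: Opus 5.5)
Your proof is correct and takes essentially the paper's route: both identify $\mathrm{cl}(s)$ with the cyclic shifts of $\bar s$, extract the pattern either from the digit in the $b$-position or from the last nonzero digit before the zero block, and conversely rotate the pattern into alignment with $\bar u$ (the paper phrases this as $\bar w\ll\overline{uq^{\ell}}$, and your converse is more explicit than its ``follows directly''). The only slip is your side remark that $n$ lies in neither set, which fails for $t=0$: then neither $\bar u$ nor the pattern $x$ contains a zero digit, and when $a=b=q-1$ in fact $u=n\in T\cap R$. The remark is unnecessary anyway, since if $\mathrm{cl}(s)$ is taken to be the shift orbit of $\bar s$, as in the paper, your two inclusions already cover $s=n$.
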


\begin{IEEEproof}
By definition, $w \in T$ if and only if $w \in \mathrm{cl}(s)$ for some $\bar{s} \ll \bar{u}$.
Since $\mathrm{cl}(s)$ consists precisely of all cyclic shifts of $\bar{s}$, this condition is equivalent to $\bar{w} \ll \overline{u q^{\ell}}$ for some $\ell \in [0, m-1]$.

We now examine the cyclic shifts of $\bar{u}$. For $0\le \ell \le t$, the cyclic shift takes the form
\[\overline{u q^\ell}=
\underbrace{0\cdots0}_{\ell}\underbrace{a\cdots a}_{m-t-1}
\, b \,
\underbrace{0\cdots0}_{t-\ell}.
\]
For $t+1\le \ell \le m-1$, the cyclic shift takes the form
\[
\overline{u q^\ell}
=
\underbrace{a\cdots a}_{\ell-t-1}
\, b \,
\underbrace{0\cdots0}_{t}
\underbrace{a\cdots a}_{m-\ell}.
\]

Suppose first that $w\in T$ and $w\ne 0$. Then there exists some $\ell$ such that $\bar{w}\ll \overline{u q^\ell}$.
From the structure of $\overline{u q^\ell}$, it follows immediately that:
\begin{itemize}
\item all digits of $\bar{w}$ are at most $a$, and
\item $\bar{w}$ contains a cyclic subsequence of the form
$x\underbrace{0\cdots0}_{t}$, $x\in [0,b]$.
\end{itemize}

If $x\ne 0$, then this yields a subsequence of the form $x\underbrace{0\cdots0}_{t}$, $x\in [1,b]$, as required. If $x=0$, then $\bar{w}$ contains a cyclic subsequence consisting of at least $t+1$ consecutive zeros.
Since $w\ne 0$, there must exist a preceding nonzero digit. Tracing backward cyclically, we obtain a subsequence of the form $y\underbrace{0\cdots0}_{t+1}$, for some $y\in [1,a]$.

If $y\in [1,b]$, then the subsequence contains the required pattern $x\underbrace{0\cdots0}_{t},\,\,  x\in [1,b]$. If instead $y\in [b+1,a]$, then we obtain the required pattern
$y\underbrace{0\cdots0}_{t+1}$.

Conversely, suppose $\bar{w}$ contains one of the specified cyclic subsequences and all digits are at most $a$. Then it follows directly from the structure of the cyclic shifts of $\bar{u}$ that there exists some $\ell$ such that
$
\bar{w}\ll \overline{u q^\ell}.
$
Hence,
$
w\in T
$.

Finally, since $0\in \mathrm{cl}(0)$ and $\bar{0}\ll \bar{u}$, we have $0\in T$.

This completes the proof.
\end{IEEEproof}

To determine the cardinality of $T$, we partition it according to the number of occurrences of the admissible cyclic subsequences.

For integers $k,\ell\ge 0$, define
\begin{equation}\label{Bkl}
\mathcal{B}_{k,\ell}
=
\left\{
s\in [0,n]
\;\middle|\;
\begin{array}{l}
\text{the $q$-adic expansion of $s$ contains exactly}\\[4pt]
k \text{ cyclic subsequences of the form }
x\underbrace{0\cdots0}_{t}, \; x\in [1,b],\\[6pt]
\ell \text{ cyclic subsequences of the form }
y\underbrace{0\cdots0}_{t+1}, \; y\in [b+1,a],\\[6pt]
\text{and all other digits are at most } a.
\end{array}
\right\}.
\end{equation}

The admissible pairs $(k,\ell)$ are those satisfying the constraint
\begin{equation}\label{eq:omega}
\Omega
=
\left\{
(k,\ell)
\;\middle|\;
k(t+1)+\ell(t+2)\le m,\;
(k,\ell)\ne (0,0),\;
k,\ell\ge 0
\right\}.
\end{equation}

\begin{lemma}\label{lem:TBkl}
Let $T$ and $\mathcal{B}_{k,\ell}$ be defined as above. Then we have the following results.
\[T =  \bigsqcup_{(k,\ell)\in \Omega} \mathcal{B}_{k,\ell} \bigsqcup \{0\}, \,\, {\rm and } \,\, | T | = \sum_{(k,\ell) \in \Omega} |\mathcal{B}_{k,\ell}| + 1. \]
\end{lemma}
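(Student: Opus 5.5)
The plan is to read the decomposition directly off Lemma~\ref{lem:defT}. Every nonzero $s\in T$ has a $q$-adic expansion, viewed cyclically, containing at least one admissible pattern, namely $x0^t$ with $x\in[1,b]$ or $y0^{t+1}$ with $y\in[b+1,a]$, and all of its digits are at most $a$. For such an $s$ I will define $k(s)$ and $\ell(s)$ as the numbers of occurrences of the two kinds of pattern. Then $s\in\mathcal{B}_{k(s),\ell(s)}$, and $(k(s),\ell(s))\neq(0,0)$. Conversely, every element of a $\mathcal{B}_{k,\ell}$ with $(k,\ell)\neq(0,0)$ contains at least one pattern and has all digits at most $a$, so it lies in $T$ by Lemma~\ref{lem:defT}. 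The element $0$ contains no nonzero digit, hence no pattern, so it is accounted for separately. This yields $T=\{0\}\cup\bigcup_{(k,\ell)\neq(0,0)}\mathcal{B}_{k,\ell}$.

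Disjointness will follow because the pair $(k,\ell)$ is a function of $s$. I will count occurrences as starting positions $i\in[0,m-1]$ at which the pattern begins cyclically, so that the counts are well defined. Distinct pairs then give disjoint sets, and $0$ lies in no $\mathcal{B}_{k,\ell}$ with $(k,\ell)\neq(0,0)$. This gives the disjoint union, and the cardinality formula follows by summing.

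The main point, and the step I expect to need the most care, is to show that nonempty $\mathcal{B}_{k,\ell}$ forces $k(t+1)+\ell(t+2)\le m$, so that the index set can be restricted to $\Omega$. The first part of the argument is to show that two distinct occurrences never overlap. Each occurrence begins with a nonzero digit followed by a run of at least $t$ zeros. If a second occurrence started inside the window of the first, its leading nonzero digit would have to sit at a position the first pattern requires to be $0$, which is impossible. The two occurrences also cannot share a starting position, since a $y$-type pattern has leading digit $y\ge b+1$ while an $x$-type pattern has leading digit $x\le b$.

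One caveat has to be handled explicitly. If $t+1>m$ or the windows wrap all the way around, a pattern of length $t+2$ could overlap itself cyclically. For example, when $m=t+1$ a $y$-pattern would need $t+2>m$ positions, which forces $\ell=0$. I will check that in that situation the cyclic window of the pattern still consists of distinct positions, and that otherwise the pattern simply cannot occur, which is consistent with the constraint defining $\Omega$.

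With non-overlap established, the occurrences occupy pairwise disjoint cyclic windows of sizes $t+1$ and $t+2$ among the $m$ positions. Hence $k(t+1)+\ell(t+2)\le m$, and all nonempty blocks are indexed by $\Omega$. The empty blocks outside $\Omega$ contribute nothing, so $|T|=\sum_{(k,\ell)\in\Omega}|\mathcal{B}_{k,\ell}|+1$.
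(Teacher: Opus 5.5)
Your proposal is correct and takes the same approach as the paper: use Lemma~\ref{lem:defT} to see that every nonzero element of $T$ contains at least one admissible pattern with all digits at most $a$, then partition according to the pair $(k,\ell)$ of pattern counts, which is a function of the element. You go further than the paper by proving that distinct occurrences occupy disjoint cyclic windows of lengths $t+1$ and $t+2$, with $y$-patterns impossible when $t=m-1$; this justifies restricting the index set to $\Omega$, a step the paper leaves implicit.
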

\begin{IEEEproof}
By Lemma~\ref{lem:defT}, if $ w \in T^*$, then the $q$-ary expansion of $w$ of length $m$ must contain at least one cyclic subsequence of the form either
\[ x\underbrace{0\cdots 0}_{t} \,\, \text{with}\ x \in [1,b], \,\, \text{or} \,\,
y\underbrace{0\cdots 0}_{t+1} \,\, \text{with}\ y \in [b+1,a],
\]
and all other digits are at most $a$. From the definition of $\mathcal{B}_{k,\ell}$, it follows directly that these sets are pairwise disjoint and
\[T = \bigsqcup_{(k,\ell) \in \Omega} \mathcal{B}_{k,\ell} \bigsqcup \{0\}.\]
Hence, $|T| = \sum_{(k,\ell) \in \Omega} |\mathcal{B}_{k,\ell}| + 1$.
\end{IEEEproof}

To determine the size of $T$, by Lemma~\ref{lem:TBkl}, we need to calculate the size of $\mathcal{B}_{k,\ell}$. But it is still difficult to
directly obtain the size of this set. To this end, we define the following set:
\begin{equation}\label{eq:Sxyz}
S _{k,\ell}(x,y,z) = \left\{ w \in \{x,y,z,0\}^m  \ \middle| \
\begin{array}{l}
\text{$w$ exactly contains $k$ cyclic-subsequences of  }\\
\text{the form $x\underbrace{0\dots 0}_{t}$, \,$\ell$ cyclic-subsequences of } \\
\text{the form $y\underbrace{0\dots 0}_{t+1}$, and all other digits are $z$}. \\
\end{array} \right\},
\end{equation}
where $k,\ell$ belong to $\Omega$ in (\ref{eq:omega}) and $x, y, z$ are viewed as symbols. Moreover, we define a matrix $\mathcal{A}_{k,\ell}$ as follows:
arrange all elements of set $S_{k,\ell}(x,y,z)$ into a row; then for each $w(x,y,z)\in S_{k,\ell}(x,y,z)$ in the first row vector, let
$x$ range over $[1, b]$, $y$ range over $[b+1, a]$ and $z$ range over $[0, a]$. In this way, we obtain a matrix $\mathcal{A}_{k,\ell}$ with $|S_{k,\ell}(x,y,z)|$ columns and
$b^k(a-b)^{\ell} (a+1)^{m - k(t+1)-\ell(t+2)}$ rows.  Thus, the total number of entries of $\mathcal{A}_{k,\ell}$ is
\begin{equation}\label{eq:defAkl}
 b^k (a-b)^{\ell} (a+1)^{m - k(t+1) - \ell(t+2)} \times |S_{k,\ell}(x,y,z)| \triangleq A_{k,\ell} .
\end{equation}

Next, we determine the size of $S_{k,\ell}(x,y,z)$.
\begin{lemma}\label{lem:sizeS}
Let $S_{k,\ell}(x,y,z)$ be defined as above. Then
\begin{equation}\label{eq:Skl}
\lvert S_{k,\ell}(x,y,z) \rvert = \frac{m}{m - kt - \ell(t+1)} \cdot \frac{[m - kt - \ell(t+1)]!}{k! \, \ell! \, [m - k(t+1) - \ell(t+2)]!}.
\end{equation}
\end{lemma}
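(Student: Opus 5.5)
We need to count cyclic words of length $m$ over the symbols $\{x,y,z,0\}$ that decompose cyclically into $k$ blocks $X=x0^t$ (length $t+1$), $\ell$ blocks $Y=y0^{t+1}$ (length $t+2$), and $r=m-k(t+1)-\ell(t+2)$ single symbols $z$. Set $N=k+\ell+r=m-kt-\ell(t+1)$, the total number of blocks. The claimed formula is $\frac{m}{N}\binom{N}{k,\ell,r}$, the standard count for arrangements of blocks on a cycle of $m$ labelled positions. The plan has three steps.

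\textbf{Step 1: unique parsing.} Every $0$ in $w$ belongs to the tail of exactly one $X$ or $Y$ block, because the only digits allowed besides these blocks are $z$'s. Each position carrying $x$, $y$ or $z$ starts a block. So $w$ determines the set of block-start positions, and the cyclic sequence of block types, uniquely.

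I will treat $x,y,z$ as distinct formal symbols, and also distinct from $0$. Then the parsing is genuinely unambiguous: an $x$ must be followed by exactly $t$ zeros and then a non-zero symbol, and a $y$ by exactly $t+1$ zeros. The word "exactly" in the definition of $S_{k,\ell}$ is what enforces this. This is the point that needs care. If $z$ could equal $0$, or if $x$ could be followed by extra zeros, the counts would change. Since $x,y,z$ are viewed as symbols, I take them as distinct.

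\textbf{Step 2: double counting.} Consider pairs $(w,p)$ where $w\in S_{k,\ell}$ and $p$ is a block-start position of $w$. Each $w$ has exactly $N$ block starts, so there are $N\,|S_{k,\ell}|$ such pairs.

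Now count the pairs another way. Fixing $p$ in $[0,m-1]$ and reading $w$ from $p$, we get a linear sequence of $N$ blocks with multiplicities $(k,\ell,r)$. Conversely, any such linear sequence placed starting at $p$ yields a valid $w$ with $p$ a block start, since the total length is $m$. By Step 1 this correspondence is a bijection. This gives $m\binom{N}{k,\ell,r}$ pairs.

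Equating the two counts yields
\[
|S_{k,\ell}|=\frac{m}{N}\cdot\frac{N!}{k!\,\ell!\,r!},
\]
which is exactly (\ref{eq:Skl}).

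\textbf{Step 3: edge cases.} Since $(k,\ell)\ne(0,0)$ and $m\ge1$, we have $N\ge1$, so the division is valid. The case $r=0$ is covered by $0!=1$. The degenerate situation where a single block wraps around the whole cycle, for example $k=1$, $\ell=0$ and $m=t+1$, is handled by the same argument: $w=x0^t$ has one block start, and the count gives $m\cdot 1/1=m$ rotations. These are $m$ distinct words, because $x\neq 0$ as symbols.

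The main obstacle is Step 1's unambiguity, specifically making precise that the "exactly $k$ / exactly $\ell$" conditions together with "all other digits are $z$" force a tiling. Once that is settled, the double count is routine.
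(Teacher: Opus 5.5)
Your argument is correct, but it reaches the identity by a different route from the paper's, and a more economical one.

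The paper also starts from linear arrangements. Its set $\bar{S}_{k,\ell}$ is exactly the set of words of $S_{k,\ell}$ with a block beginning at position $0$, and $|\bar{S}_{k,\ell}|=\binom{N}{k}\binom{N-k}{\ell}$. To pass to $S_{k,\ell}$, the paper compares two cyclic actions, the symbol shift $\tau$ and the block shift $\sigma$. It proves that $\sigma^{d}(w)=w$ if and only if $\tau^{dm/N}(w)=w$, concludes that each $\tau$-orbit has $m/N$ times as many elements as the corresponding $\sigma$-orbit, and sums over orbits.

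Your double count of pairs (word, block start) gives $N|S_{k,\ell}|=m|\bar{S}_{k,\ell}|$ directly, with no analysis of periods or stabilizers. It therefore avoids two facts that the paper's orbit argument needs but leaves tacit:
\begin{itemize}
\item every $\tau$-period of $w$ has the form $dm/N$ for some $\sigma$-period $d$, so the stated equivalence really determines the orbit size;
\item two words of $\bar{S}_{k,\ell}$ in the same $\tau$-orbit already lie in the same $\sigma$-orbit, so the orbit sums match.
\end{itemize}
Both facts reduce to the unique parsing you state explicitly in Step~1. The paper uses it only implicitly, when it asserts that a period segment $v_1\cdots v_{dm/N}$ equals $w_1\|\cdots\|w_d$.

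In return, the paper's route gives orbit-level information: a bijection between $\tau$-orbits of $S_{k,\ell}$ and $\sigma$-orbits of $\bar{S}_{k,\ell}$ under which sizes scale by $m/N$. That would be the relevant statement if one wanted to count the cyclotomic cosets themselves rather than their elements. For the lemma as stated, your proof is shorter and more transparent.
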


\begin{IEEEproof}
Let \[
\bar{S}_{k,\ell}(x,y,z) = \left\{ w \in \{x\underbrace{0\dots0}_{t},\ y\underbrace{0\dots0}_{t+1},\ z\}^N \ \middle| \
\begin{array}{l}
\text{$w$ consists of $k$ copies of $x\underbrace{0\dots0}_{t}$,} \\
\text{$\ell$ copies of $y\underbrace{0\dots0}_{t+1}$, and the} \\
\text{remaining $N - k - \ell$ elements are $z$. }
\end{array} \right\}
\]
be the set of sequences of length $m$, where $N = m - kt - \ell(t+1)$. By combinatorial counting, we know that
\begin{equation}\label{eq:barSkl}
| \bar{S}_{k,\ell}(x,y,z) | = \binom{N}{k} \binom{N-k}{\ell} = \frac{[m - kt - \ell(t+1)]!}{k! \, \ell! \, [m - k(t+1) - \ell(t+2)]!}.
\end{equation}
Each element in $\bar{S}_{k,\ell}(x,y,z)$ has two representations:
\begin{itemize}
\item As $v_1 \cdots v_m$, where $v_i \in \{x, y, z, 0\}$, or
\item As $w_1 \| \cdots \| w_N$, where $w_i \in \{x\underbrace{0\dots0}_{t},\ y\underbrace{0\dots0}_{t+1},\ z\}$ and $\|$ denotes concatenation.
\end{itemize}

Let $\tau$ and $\sigma$ denote cyclic shift operators on sequences of the forms $v_1 \cdots v_m$ and $w_1 \| \cdots \| w_N$, respectively, i.e.,
\begin{align*}
\tau(v_1 \cdots v_m) &= v_m v_1 \cdots v_{m-1} \quad \text{for $v_i \in \{x, y, z, 0\}$}, \\
\sigma(w_1 \| \cdots \| w_N) &= w_N \| w_1 \| \cdots \| w_{N-1} \quad \text{for $w_i \in \{x\underbrace{0\dots0}_{t},\ y\underbrace{0\dots0}_{t+1},\ z\}$}.
\end{align*}
For a $w\in \bar{S}_{k,\ell}(x,y,z)$, we claim that $\sigma^d(w) = w$ for some positive integer $d$ if and only if $\tau^{d \cdot \frac{m}{N}}(w) = w$.
In fact, when $w$ is expressed as the form $w = w_1 \| \cdots \| w_N$ and $\sigma^d(w) = w$, we know that $w$ has the following form
\[
\underbrace{w_1 \| \cdots \| w_d \| \cdots \| w_1 \| \cdots \| w_d}_{\frac{N}{d} \text{ copies}},\,\, w_i \in \{x\underbrace{0\dots0}_{t},\ y\underbrace{0\dots0}_{t+1},\ z\}.
\]
It is easy to see that the sequence $w_1 \|\cdots \| w_d$ has length $\frac{md}{N}$, and then $\tau^{\frac{md}{N}}(w) = w$.
On the other hand, if $\tau^{d \cdot \frac{m}{N}}(w) = w$, then $w$ has the following form:
\[
\underbrace{v_1 \cdots v_{d \cdot \frac{m}{N}} \cdots v_1 \cdots v_{d \cdot \frac{m}{N}}}_{\frac{N}{d} \text{ copies}},\,\, v_i \in \{x, y, z, 0\}.
\]
By the definition of $\bar{S}_{k,\ell}(x,y,z)$, we know that
\[
v_1 \cdots v_{d \cdot \frac{m}{N}} = w_1 \| \cdots \| w_d,\,\,  w_i \in \{x\underbrace{0\dots0}_{t},\ y\underbrace{0\dots0}_{t+1},\ z\}.
\]
Thus, $\sigma^d(w) = w$.

From the definitions of $S_{k,\ell}(x,y,z)$ and $ \bar{S}_{k,\ell}(x,y,z)$, we have
\begin{equation*}\label{eq:relationSbarS}
S_{k,\ell}(x,y,z) = \left\{ \tau^i(w) \mid w \in \bar{S}_{k,\ell}(x,y,z),\ i \in [0, m-1] \right\}.
\end{equation*}
For any $w \in \bar{S}_{k,\ell}(x,y,z)$, by our claim we have that
\[ |\{\tau^i(w) \mid i \in [0, m-1]\} | = \frac{m}{N} \cdot | \{\sigma^i(w) \mid i \in [0, N-1]\} |. \]
Since $S_{k,\ell}(x,y,z)$ and $\bar{S}_{k,\ell}(x,y,z)$ are unions of equivalence classes under the cyclic shift operations of $\tau$ and $\sigma$, respectively, we deduce that
\begin{equation}\label{eq:rSklbarSkl}
 | S_{k,\ell}(x,y,z)| = \frac{m}{N} \cdot |\bar{S}_{k,\ell}(x,y,z)|.
\end{equation}
So, from (\ref{eq:barSkl}) and (\ref{eq:rSklbarSkl}) we obtain the equality (\ref{eq:Skl}). This completes the proof.
\end{IEEEproof}

By (\ref{eq:defAkl}) and Lemma~\ref{lem:sizeS} we get the size of the matrix $\mathcal{A}_{k,\ell}$:
\begin{equation}\label{eq:Akl}
A_{k,\ell}=b^k (a-b)^\ell (a+1)^{m - k(t+1) - \ell(t+2)}\frac{m}{m - kt - \ell(t+1)}\frac{[m - kt - \ell(t+1)]!}{k!\ell![m - k(t+1) - \ell(t+2)]!}.
\end{equation}

From the definition of $\mathcal{A}_{r,s}$, where $(r, s) \in \Omega$ in (\ref{eq:omega}), it is known that all entries of $\mathcal{A}_{r,s}$ come from elements of the set family
$\{\mathcal{B}_{k,\ell}\}_{(k,\ell)\in \Omega}$. We then examine the frequency with which each element $w$ of the set $\mathcal{B}_{k,\ell}$ appears in the matrix $\mathcal{A}_{r,s}$,
thereby establishing a quantitative relationship between the set families $\{A_{r,s}\}_{(r,s)\in \Omega}$ and $\{B_{k,\ell}\}_{(k,\ell)\in \Omega}$.

\begin{lemma}\label{lem:Ars}
Following the previous symbols and definitions, we have the following identity:
\begin{equation}\label{eq:ArsBkl}
A_{r,s} = \sum_{(k,\ell) \in \Omega} \binom{k}{r} \binom{\ell}{s} | \mathcal{B}_{k,\ell}|, \,\, (r, s) \in \Omega,
\end{equation}
where $A_{r,s}$ denotes the size of the matrix $\mathcal{A}_{r,s}$.
\end{lemma}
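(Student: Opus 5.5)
The identity will follow from double counting the entries of $\mathcal{A}_{r,s}$. Each entry of $\mathcal{A}_{r,s}$ is an integer in $[0,n]$. It is obtained from a pair (column, row): the column is a pattern word $w(x,y,z)\in S_{r,s}(x,y,z)$, and the row is a substitution assigning a value in $[1,b]$ to each symbol $x$, a value in $[b+1,a]$ to each $y$, and a value in $[0,a]$ to each $z$. I will show two things. First, every entry lies in $\bigsqcup_{(k,\ell)\in\Omega}\mathcal{B}_{k,\ell}$. Second, each $w\in\mathcal{B}_{k,\ell}$ arises from exactly $\binom{k}{r}\binom{\ell}{s}$ pairs (column, row). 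Summing over $w$ then gives (\ref{eq:ArsBkl}).

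\textbf{Step 1: occurrences in an integer are disjoint.} For an integer $w$ with all digits at most $a$, call an \emph{$x$-occurrence} a cyclic position holding a digit in $[1,b]$ followed by at least $t$ zeros. Call a \emph{$y$-occurrence} a position holding a digit in $[b+1,a]$ followed by at least $t+1$ zeros. Each occurrence starts at a nonzero digit and is otherwise all zeros, so distinct occurrences (as cyclic windows of lengths $t+1$ and $t+2$) are pairwise disjoint. Hence an integer with $k$ $x$-occurrences and $\ell$ $y$-occurrences satisfies $k(t+1)+\ell(t+2)\le m$. This matches the counting implicit in the definition of $\mathcal{B}_{k,\ell}$ and $\Omega$, and it also gives the decomposition in Lemma~\ref{lem:TBkl}.

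\textbf{Step 2: entries lie in the right sets.} Take an entry produced from a column in $S_{r,s}$. The $r$ blocks $x0^t$ and $s$ blocks $y0^{t+1}$ of the column remain genuine occurrences after substitution. The $z$-positions take values in $[0,a]$ and may create further occurrences, for instance a $z$-value in $[1,b]$ followed by zeros coming from $z=0$ or from block tails. Let the entry have $k$ $x$-occurrences and $\ell$ $y$-occurrences. Then $k\ge r$ and $\ell\ge s$, and by Step~1 we have $(k,\ell)\in\Omega$, since $(r,s)\neq(0,0)$. So the entry lies in $\mathcal{B}_{k,\ell}$.

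\textbf{Step 3: exact multiplicity.} Fix $w\in\mathcal{B}_{k,\ell}$. Consider any choice of $r$ of its $k$ $x$-occurrences and $s$ of its $\ell$ $y$-occurrences. By Step~1 the chosen windows are disjoint. Replace each chosen window by the symbol block $x0^t$ or $y0^{t+1}$, and every other position by $z$. This gives a column in $S_{r,s}(x,y,z)$, together with a row read off from the digits of $w$. That row is admissible: the $x$-values lie in $[1,b]$, the $y$-values in $[b+1,a]$, and the $z$-values in $[0,a]$ because all digits of $w$ are at most $a$. The resulting pair produces $w$.

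Conversely, take any (column, row) pair producing $w$. Its $r$ blocks $x0^t$ and $s$ blocks $y0^{t+1}$ are $x$- and $y$-occurrences of $w$, so the pair determines such a choice of $r$ and $s$ occurrences. These two maps are mutually inverse: the column is determined by the positions of its blocks, and the row by the digits of $w$. Therefore $w$ appears exactly $\binom{k}{r}\binom{\ell}{s}$ times in $\mathcal{A}_{r,s}$. This count is $0$ automatically when $k<r$ or $\ell<s$.

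\textbf{Main obstacle.} The delicate point is Step~3, namely that the correspondence is a genuine bijection. Two issues must be controlled. First, $z$-positions can mimic or extend blocks. This is handled by the convention that occurrences are defined through the digits of $w$, so that a $z=0$ adjacent to a block merely lengthens its zero run. Second, distinct columns of $S_{r,s}(x,y,z)$ are distinct words, so no overcounting comes from the cyclic-shift identifications already used in Lemma~\ref{lem:sizeS}. I would check carefully that "contains exactly $k$ cyclic subsequences" in (\ref{Bkl}) is read as counting these occurrences, so that the $x$/$y$ classification by digit value is unambiguous. In particular, a digit in $[b+1,a]$ followed by exactly $t$ zeros is neither an $x$- nor a $y$-occurrence.
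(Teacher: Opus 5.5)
Your proof is correct and follows the same route as the paper: double counting the entries of $\mathcal{A}_{r,s}$ by showing that each $w\in\mathcal{B}_{k,\ell}$ arises from exactly $\binom{k}{r}\binom{\ell}{s}$ (column, row) pairs, one for each choice of $r$ of its $x$-occurrences and $s$ of its $y$-occurrences. You also fill in details the paper only asserts. These are the disjointness of occurrences, which places every entry in some $\mathcal{B}_{k,\ell}$ with $(k,\ell)\in\Omega$, and the inverse map from (column, row) pairs back to choices of occurrences, which makes the correspondence a genuine bijection.
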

\begin{IEEEproof}
By the definition of $\mathcal{A}_{r,s}$, we know that its all entries come from the set families $\{\mathcal{B}_{k,l}\}_{(k,l) \in \Omega}$.
For any $w \in \mathcal{B}_{k,\ell}$, the $q$-ary expansion of $w$ of length $m$ contains exactly:
\begin{itemize}
\item $k$ cyclic subsequences of the form $x\underbrace{0\cdots0}_{t}$ with $x \in [1, b]$, and
\item $\ell$ cyclic subsequences of the form $y\underbrace{0\cdots0}_{t+1}$ with $y \in [b+1, a]$.
\end{itemize}
Without loss of generality, assume that
\[\bar{w} = x_1\underbrace{0\cdots0}_{t} \cdots x_k\underbrace{0\cdots0}_{t} y_1\underbrace{0\cdots0}_{t+1} \cdots y_\ell\underbrace{0\cdots0}_{t+1} \cdots \]
where $x_i \in [1, b]$ and $y_j \in [b+1, a]$.

By selecting $r$ out of the $k$ subsequences of the first type and $s$ out of the $\ell$ subsequences of the second type in $\bar{w}$, we obtain sequences of the form:
\[\cdots x_{i_1}\underbrace{0\cdots0}_{t} \cdots x_{i_r}\underbrace{0\cdots0}_{t} \cdots y_{j_1}\underbrace{0\cdots0}_{t+1} \cdots y_{j_s}\underbrace{0\cdots0}_{t+1} \cdots \]
Each such choice corresponds to an element in on column of $\mathcal{A}_{r,s}$, and the number of distinct choices is exactly $\binom{k}{r} \binom{\ell}{s}$.
So, each element $w \in \mathcal{B}_{k,\ell}$ appears exactly $\binom{k}{r} \binom{\ell}{s}$ times in the matrix $\mathcal{A}_{r,s}$. Hence, the equalities in (\ref{eq:ArsBkl})
are derived.
\end{IEEEproof}

To determine the size $|\mathcal{B}_{k, \ell}|$ of $\mathcal{B}_{k, \ell}$,  we only need to solve the linear equations of (\ref{eq:ArsBkl}) since
$A_{r,s}$ has been given in (\ref{eq:Akl}). The following lemma gives the result.

\begin{lemma}\label{lem:sizeBkl}
Let the symbols $\Omega$, $A_{r,s}$ and $|\mathcal{B}_{k, \ell}|$ be as defined above. Then $|\mathcal{B}_{k, \ell}|$ can be linearly expressed in terms of $\{ A_{r,s}\}_{(r,s) \in \Omega}$ as follows:
\begin{equation}\label{eq:Bkl}
|\mathcal{B}_{k, \ell}| = \sum_{(r,s) \in \Omega} (-1)^{r+s-k-\ell} \binom{r}{k} \binom{s}{\ell} A_{r,s}.
\end{equation}
\end{lemma}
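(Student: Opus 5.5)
This is binomial inversion in two variables applied to the triangular system of Lemma~\ref{lem:Ars}. We substitute the identity (\ref{eq:ArsBkl}) into the right-hand side of (\ref{eq:Bkl}) and show that it collapses to $|\mathcal{B}_{k,\ell}|$.

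\textbf{Step 1: the index set is a down-set.} If $(k',\ell')\in\Omega$ and $(r,s)$ satisfies $r\le k'$, $s\le \ell'$, $(r,s)\ne(0,0)$, then $(r,s)\in\Omega$. This holds because $r(t+1)+s(t+2)\le k'(t+1)+\ell'(t+2)\le m$.

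\textbf{Step 2: substitute and swap sums.} Fix $(k,\ell)\in\Omega$. Using Lemma~\ref{lem:Ars}, write
\[
\sum_{(r,s)\in\Omega}(-1)^{r+s-k-\ell}\binom{r}{k}\binom{s}{\ell}A_{r,s}
=\sum_{(k',\ell')\in\Omega}|\mathcal{B}_{k',\ell'}|\sum_{(r,s)\in\Omega}(-1)^{r+s-k-\ell}\binom{r}{k}\binom{s}{\ell}\binom{k'}{r}\binom{\ell'}{s}.
\]
The coefficient is nonzero only when $k\le r\le k'$ and $\ell\le s\le \ell'$. Since $(k,\ell)\ne(0,0)$, every such $(r,s)$ is nonzero. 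By Step 1, every such $(r,s)$ therefore lies in $\Omega$. Hence the inner sum runs over the full box $[k,k']\times[\ell,\ell']$, and it factors as
\[
\Big(\sum_{r=k}^{k'}(-1)^{r-k}\binom{k'}{r}\binom{r}{k}\Big)\Big(\sum_{s=\ell}^{\ell'}(-1)^{s-\ell}\binom{\ell'}{s}\binom{s}{\ell}\Big).
\]

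\textbf{Step 3: evaluate each factor.} Apply $\binom{k'}{r}\binom{r}{k}=\binom{k'}{k}\binom{k'-k}{r-k}$. The first factor becomes $\binom{k'}{k}(1-1)^{k'-k}=\delta_{k,k'}$, using the convention $0^0=1$. The second factor is $\delta_{\ell,\ell'}$ in the same way. The double sum therefore equals $|\mathcal{B}_{k,\ell}|$, which is (\ref{eq:Bkl}).

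The only delicate point is Step 2. We must check that restricting summation to $\Omega$, rather than to all pairs $(r,s)\ge 0$, does not truncate the alternating sums. This works only because $\Omega$ is closed downward and excludes only $(0,0)$, which never falls inside a box whose lower corner is $(k,\ell)\ne(0,0)$. Everything else is the standard inversion of the upper-triangular matrix $\bigl(\binom{k}{r}\binom{\ell}{s}\bigr)$.
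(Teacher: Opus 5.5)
Your proof is correct and takes essentially the same route as the paper. Both substitute the triangular relation of Lemma~\ref{lem:Ars}, interchange the finite sums, and collapse the coefficient using the one-variable binomial inversion identity $\sum_{r}(-1)^{r-k}\binom{u}{r}\binom{r}{k}=\delta_{u,k}$. Your explicit check that $\Omega$ is closed downward and omits only $(0,0)$ justifies a step the paper leaves unstated: it silently identifies the coefficient condition summed over $\Omega$ with the binomial identity summed over the full box $[0,u]\times[0,v]$, so your write-up is slightly more complete.
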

\begin{IEEEproof}
Equation \eqref{eq:ArsBkl} is a two-parameter binomial-transform relation, and the inversion can be obtained from the standard binomial inversion formula. Assume
\begin{equation}\label{eq:Bkl-Ars}
|\mathcal{B}_{k, \ell}|  = \sum_{(r,s) \in \Omega} c(r,s,k,\ell) A_{r,s},
\qquad c(r,s,k,\ell)\in \mathbb{Q}.
\end{equation}
Substituting \eqref{eq:ArsBkl} into \eqref{eq:Bkl-Ars}, we obtain
\[
|\mathcal{B}_{k, \ell}| = \sum_{(r,s) \in \Omega} c(r,s,k,\ell)\sum_{(u,v) \in \Omega} \binom{u}{r} \binom{v}{s} |\mathcal{B}_{u,v}|.
\]
Interchanging the order of summation gives
\begin{equation}\label{eq:BklBkl}
|\mathcal{B}_{k,\ell}| = \sum_{(u,v) \in \Omega} \left( \sum_{(r,s) \in \Omega} c(r,s,k,\ell) \binom{u}{r} \binom{v}{s} \right) |\mathcal{B}_{u,v}|.
\end{equation}
Since this identity holds for all $(u,v)\in\Omega$, we must have
\begin{equation}\label{eq:coeff}
\sum_{(r,s) \in \Omega} c(r,s,k,\ell) \binom{u}{r} \binom{v}{s} = \delta_{u,k} \delta_{v,\ell}
\qquad \text{for all } (u,v) \in \Omega.
\end{equation}

Now, by the binomial inversion identity,
\[
\sum_{r=0}^{u} (-1)^{r-k} \binom{u}{r} \binom{r}{k} = \delta_{u,k},
\qquad
\sum_{s=0}^{v} (-1)^{s-\ell} \binom{v}{s} \binom{s}{\ell} = \delta_{v,\ell}.
\]
Multiplying these two identities, we obtain
\begin{equation}\label{eq:coff1}
\sum_{r=0}^{u} \sum_{s=0}^{v} (-1)^{r+s-k-\ell} \binom{r}{k} \binom{s}{\ell} \binom{u}{r} \binom{v}{s} = \delta_{u,k} \delta_{v,\ell}.
\end{equation}
Comparing \eqref{eq:coeff} and \eqref{eq:coff1}, we may take
\[
c(r,s,k,\ell) = (-1)^{r+s-k-\ell} \binom{r}{k} \binom{s}{\ell}.
\]
Substituting this into \eqref{eq:Bkl-Ars} yields \eqref{eq:Bkl}.
\end{IEEEproof}

By Lemma~\ref{lem:TBkl}, Lemma~\ref{lem:sizeBkl} and (\ref{eq:Akl}), we have the following main theorem.
\begin{theorem}\label{thm:newresult}
Let $\bar{u} =  \overbrace{a \cdots a b\underbrace{0\cdots0}_t}^{m} $, and
$T = \bigcup_{\bar{s} \ll \bar{u}} \mathrm{cl}(s)$. Then
\begin{equation*}
\left| T \right|= \sum_{(k,\ell) \in \Omega} \sum_{(r,s) \in \Omega} (-1)^{r+s-k-\ell} \binom{r}{k} \binom{s}{\ell} A_{r,s} + 1,
\end{equation*}
where
\begin{equation*}
\Omega = \left\{ (k,\ell) \ \middle| \ k(t+1) + \ell(t+2) \leq m,\ (k,\ell) \neq (0,0),\ 0 \leq k, \ell \right\},
\end{equation*}
\begin{equation*}
A_{r,s} = b^r (a-b)^s (a+1)^{m - r(t+1) - s(t+2)} \cdot
\frac{m}{m - rt - s(t+1)} \cdot
\frac{[m - rt - s(t+1)]!}{r! \, s! \, [m - r(t+1) - s(t+2)]!},
\end{equation*}
and in the special case $t=m-1$, the parameter $a$ may be chosen arbitrarily subject to $b\leq a\leq q-1$.
\end{theorem}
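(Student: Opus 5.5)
The theorem follows by assembling the lemmas already established in this section. We start from Lemma~\ref{lem:TBkl}, which gives the disjoint decomposition $T=\bigsqcup_{(k,\ell)\in\Omega}\mathcal{B}_{k,\ell}\sqcup\{0\}$ and hence $|T|=\sum_{(k,\ell)\in\Omega}|\mathcal{B}_{k,\ell}|+1$. Into each summand we substitute the inversion formula of Lemma~\ref{lem:sizeBkl}, $|\mathcal{B}_{k,\ell}|=\sum_{(r,s)\in\Omega}(-1)^{r+s-k-\ell}\binom{r}{k}\binom{s}{\ell}A_{r,s}$. For $A_{r,s}$ we use the explicit value \eqref{eq:Akl}, obtained from \eqref{eq:defAkl} together with Lemma~\ref{lem:sizeS}. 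This produces exactly the double sum in the statement.

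The point needing genuine care is the validity of the inversion step. The system \eqref{eq:ArsBkl} is indexed by the finite set $\Omega$, which is a down-set in the product order: if $(k,\ell)\in\Omega$ and $r\le k$, $s\le \ell$ with $(r,s)\ne(0,0)$, then $(r,s)\in\Omega$. Order $\Omega$ by the partial order $(r,s)\le(k,\ell)$. The coefficient matrix $\bigl(\binom{k}{r}\binom{\ell}{s}\bigr)$ is then unitriangular, since it vanishes unless $r\le k$ and $s\le\ell$ and equals $1$ on the diagonal. The system therefore has a unique solution.

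To confirm the proposed inverse, compose it with the forward relation. This gives $\sum_{(r,s)}(-1)^{r+s-k-\ell}\binom{r}{k}\binom{s}{\ell}\binom{u}{r}\binom{v}{s}$. Only pairs with $k\le r\le u$ and $\ell\le s\le v$ contribute, and by the down-set property all such pairs lie in $\Omega$, except possibly $(0,0)$. The pair $(0,0)$ can contribute only when $k=\ell=0$, which is excluded. Hence the sum over $\Omega$ coincides with the full rectangular sum, which factors into two one-variable binomial inversions and equals $\delta_{u,k}\delta_{v,\ell}$. This verification, namely that truncating to $\Omega$ neither loses nor gains terms, is the only real obstacle, and I would write it out explicitly.

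Finally, the case $t=m-1$ is treated separately. Here $\bar u=b0\cdots0$, and $\Omega$ contains only the pair $(1,0)$, because $\ell(t+2)=\ell(m+1)>m$ forces $\ell=0$ and $k(t+1)=km\le m$ forces $k=1$. Consequently $A_{1,0}=b\,(a+1)^{0}\cdot\frac{m}{1}\cdot 1=bm$ is independent of $a$, and $|T|=bm+1$. This justifies the remark that $a$ may be chosen arbitrarily in the range $b\le a\le q-1$.
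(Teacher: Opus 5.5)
Your proposal is correct and follows the paper's route: the theorem is obtained by combining the decomposition of Lemma~\ref{lem:TBkl}, the inversion formula of Lemma~\ref{lem:sizeBkl}, and the explicit value \eqref{eq:Akl}. Your check that restricting the inversion sum to $\Omega$ loses no terms (via the down-set property and the exclusion of $(0,0)$) fills a step the paper glosses over when it compares \eqref{eq:coeff} with the full rectangular identity \eqref{eq:coff1}, and your computation $|T|=bm+1$ for $t=m-1$ correctly justifies the remark about $a$.
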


Next, we give an example to illustrate the proof process of Theorem~\ref{thm:newresult}.

\begin{example}
Let $q = 3$, $m=4$, $t=1$ and $\bar{u} = 2\, 2\, 1\, 0$. Let $T = \bigcup_{\bar{s} \ll \bar{u}} \mathrm{cl}(s)$.
In the following, we determine the size of $T$ by the steps in Theorem~\ref{thm:newresult}.

It is easy to see that
\[ \Omega = \left\{(k, \ell)\, |\, 2k+3\ell \leq 4, \,(k, \ell )\neq (0, 0) \right\} = \left\{ (1,0),\ (2,0),\ (0,1) \right\}. \]
and it has been shown that
\[|T| = |\mathcal{B}_{1,0}| + |\mathcal{B}_{2,0}| +|\mathcal{B}_{0,1}| + 1.\]

To investigate the size of the sets $\mathcal{B}_{k,\ell}$ for $(k,\ell)\in \Omega$, we define the following sets:
\[
S_{1,0} = \{x0zz,\ zx0z,\ zzx0,\ 0zzx\}, S_{2,0} = \{x0x0,\ 0x0x\}, S_{0,1} = \{y00z,\ zy00,\ 0zy0,\ 00zy\},
\]
where $x=1, y=2$ and $z\in [0,2]$ as required. Arrange all elements of $S_{k,\ell}$, $(k ,\ell)\in \Omega$ into the first row of a matrix $\mathcal{A}_{k,\ell}$, then let $z$ range over $[0, 2]$ and obtain the corresponding matrices as follows:
\[
\mathcal{A}_{1,0} =
\begin{bmatrix}
1000 & 0100 & 0010 & 0001 \\
1001 & 1100 & 0110 & 0011 \\
1002 & 2100 & 0210 & 0021 \\
\underline{1010} & \underline{0101} & \underline{1010} & \underline{0101} \\
1011 & 1101 & 1110 & 0111 \\
1012 & 2101 & 1210 & 0121 \\
1020 & 0102 & 2010 & 0201 \\
1021 & 1102 & 2110 & 0211 \\
1022 & 2102 & 2210 & 0221
\end{bmatrix},
\qquad
\begin{aligned}
&\mathcal{A}_{2,0} =
\begin{bmatrix}
\underline{1010} & \underline{0101}
\end{bmatrix},
\\[1.5ex]
&\mathcal{A}_{0,1} =
\begin{bmatrix}
\uwave{2000} & \uwave{0200} & \uwave{0020} & \uwave{0002} \\
\uwave{2001} & \uwave{1200} & \uwave{0120} & \uwave{0012} \\
\uwave{2002} & \uwave{2200} & \uwave{0220} & \uwave{0022}
\end{bmatrix}.
\end{aligned}
\]

All sequences in above matrices belong to $\mathcal{B}_{k,\ell}$ for $(k, \ell)\in \Omega$. It is easy to see that the underlined sequences belong to $\mathcal{B}_{2,0}$, wavy-underlined sequences
belong to $\mathcal{B}_{0,1}$ and other sequences belong to $\mathcal{B}_{1,0}$. Let $A_{k,\ell}$ and $B_{k,\ell}$ denotes the size of the matrix $\mathcal{A}_{k,\ell}$ and the set $\mathcal{B}_{k,\ell}$ for $(k ,\ell)\in \Omega$, respectively. By analyzing the frequency of each element of $\mathcal{B}_{k,\ell}$ appearing in the matrices $\mathcal{A}_{r,s}$, we get
\[
\begin{bmatrix}
A_{1,0} \\
A_{2,0} \\
A_{0,1}
\end{bmatrix}
=
\begin{bmatrix}
1 & 2 & 0 \\
0 & 1 & 0 \\
0 & 0 & 1
\end{bmatrix}
\begin{bmatrix}
B_{1,0} \\
B_{2,0} \\
B_{0,1}
\end{bmatrix}.
\]
It is known that $A_{1,0}=36$, $A_{2, 0} =2$ and $A_{0,1} =12$. Solving this system we have
\[ B_{1,0} = 32, \, \, B_{2,0} = 2,\,\, B_{0,1} = 12 .\]
Thus $|T| = 32 + 2 + 12 + 1 = 47$.

On the other hand, we can list all elements of $T$ as follows:
\[\begin{aligned}
T  =& \big\{  0, 1, 2, 3, 4, 5, 6, 7, 8, 9, 10, 11, 12, 13, 14, 15, 16, 17, 18, 19, 21, 24, 27, 28, 29, \\
  & 30, 31, 32, 33, 36, 37, 39, 42, 45, 46, 48,  51, 54, 55, 56, 57, 58, 59, 63, 64, 72, 73 \big\},
\end{aligned}
\]
and $T$ has exactly 47 elements. This example verifies the correctness of Theorem~\ref{thm:newresult}.
\end{example}

\section{The dimensions of a class of affine-invariant codes and related cyclic codes}\label{sec:dimensionofcodes}

In this section, we apply the result established in Section~\ref{sec:newresult} to determine the dimensions of a class of affine-invariant codes and the related cyclic codes.
To the best of our knowledge, existing work on the dimensions of cyclic codes has primarily focused on the BCH code family, with relatively few studies addressing the dimensions of cyclic codes with large dimensions.
Our theorems generalize several known results. To this end, we first recall the relevant definitions.

Let $1 \leq b \leq a \leq q-1$, $m \geq 1$, $0 \leq t \leq m-1$, and $n = q^m - 1$. Let
\[
\bar{u} = \overbrace{a\cdots a b \underbrace{0\cdots 0}_t}^{m},
\qquad
T = \bigcup_{\bar{s} \ll \bar{u}} \mathrm{cl}(s).
\]
It is known that the following linear code of length $q^m$ with defining set $T$,
\[
\C_T = \left\{  {\bf c} = \sum_{g\in \bF_{q^m}} c_g X^g \in \mathcal{A} \,\middle| \, \sum_{g\in \bF_{q^m}} c_g g^s = 0, \,\, \forall  s\in T \right\},
\]
is affine-invariant. This code can be regarded as the extension of the following cyclic code with defining set $T^*$,
\[
\C_T^* = \left\{  {\bf c} = \sum_{g\in \bF_{q^m}^*} c_g X^g \in \mathcal{M} \,\middle| \, \sum_{g\in\bF_{q^m}^*} c_g g^s = 0, \,\, \forall  s\in T^* \right\},
\]
where $T^* = T \setminus \{0\}$.

By Theorem~\ref{thm:newresult}, we directly obtain the following result.

\begin{theorem}\label{thm:dimCT}
Let the affine-invariant code $\C_T$ and the related cyclic code $\C_T^*$ be defined as above. Then
\[
{\rm dim}_{\mathbb{F}_q} (\C_T)
=
{\rm dim}_{\mathbb{F}_q} (\C_T^*)
=
q^m-1-
\sum_{(k,\ell) \in \Omega} \sum_{(r,s) \in \Omega}
(-1)^{r+s-k-\ell} \binom{r}{k} \binom{s}{\ell} A_{r,s},
\]
where
\[
\Omega = \left\{ (k,\ell) \,\, \middle| \,\, k(t+1) + \ell(t+2) \leq m,\ (k,\ell) \neq (0,0),\ 0 \leq k, \ell \right\},
\]
and
\[
A_{r,s}
=
b^r (a-b)^s (a+1)^{m - r(t+1) - s(t+2)}
\cdot
\frac{m}{m - rt - s(t+1)}
\cdot
\frac{[m - rt - s(t+1)]!}{r! \, s! \, [m - r(t+1) - s(t+2)]!}.
\]
\end{theorem}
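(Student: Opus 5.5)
The plan is to reduce both dimension statements to the cardinality of $T$ computed in Theorem~\ref{thm:newresult}. The standard fact for cyclic codes is that $\dim_{\bF_q}\C_T^* = n - |T^*|$. The reason is that the defining set $T^*$ is a union of $q$-cyclotomic cosets: it is the union of $\mathrm{cl}(s)$ over $\bar{s}\ll\bar{u}$, and since $0\in T$ forms its own coset $\mathrm{cl}(0)=\{0\}$, the set $T^*=T\setminus\{0\}$ is still a union of cosets. The code $\C_T^*$ is then the ideal of $\mathcal{M}$ whose check conditions $\rho_s(\mathbf{c})=0$, $s\in T^*$, correspond to the zeros $\alpha^{s}$ of the generator polynomial. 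That polynomial is the product of the minimal polynomials indexed by the cosets in $T^*$, so its degree equals $|T^*|$, and therefore $\dim \C_T^* = n-|T^*|$.

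The conditions indexed by $s$ and by $qs$ are equivalent, since $\rho_{qs}=\rho_s^q$. Consequently no accidental dependencies arise beyond the coset structure. I will state this as the well-known correspondence between defining sets and generator polynomials (zeros $\alpha^{s}$, $s\in T^*$) rather than reprove it.

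Next I use the fact, recalled in the preliminaries, that the extended code has defining set $T=T^*\cup\{0\}$ and that $\dim_{\bF_q}\C_T=\dim_{\bF_q}\C_T^*$. Adding the parity coordinate is injective and linear. Moreover, the condition for $s=0$, namely $\sum_g c_g=0$ with the convention $0^0=1$, is exactly the parity condition built into the extension. So $\dim\C_T=\dim\C_T^*$. Combined with $|T^*|=|T|-1$, this gives $\dim = q^m-1-(|T|-1)$.

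Finally, I substitute the formula of Theorem~\ref{thm:newresult}: $|T|-1$ equals the double sum $\sum_{(k,\ell)\in\Omega}\sum_{(r,s)\in\Omega}(-1)^{r+s-k-\ell}\binom{r}{k}\binom{s}{\ell}A_{r,s}$. Here $|T|-1=\sum_{(k,\ell)\in\Omega}|\mathcal{B}_{k,\ell}|=|T^*|$ by Lemma~\ref{lem:TBkl}, and Lemma~\ref{lem:sizeBkl} expresses each $|\mathcal{B}_{k,\ell}|$ in terms of the $A_{r,s}$. This yields the stated expression.

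The only delicate point is checking that $0$ is excluded from $T^*$ but included in $T$, so that the "$+1$" of Theorem~\ref{thm:newresult} cancels correctly. One must also note that $n\in[n]$ is never in $T$ (all digits at most $a$ together with the required zero run rule out $\bar{n}=(q-1)\cdots(q-1)$ unless $t=0$ and $a=q-1$). In that case the index $n\equiv 0$ must be handled by reading indices modulo $n$, and I will simply work with representatives in $[0,n-1]$ as the paper's counting does.
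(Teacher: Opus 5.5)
Your proposal is correct and follows the paper's route exactly. The paper obtains the theorem directly from Theorem~\ref{thm:newresult} via $\dim\C_T=\dim\C_T^*=n-|T^*|=q^m-1-(|T|-1)$, and you have only spelled out the standard facts behind that step.

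One small correction to your last paragraph concerns the degenerate case. The paper's count of $T$ runs over all integers in $[0,n]$, i.e.\ all length-$m$ digit strings. The only case with $n\in T$ is $t=0$, $a=b=q-1$ (you need $b=q-1$ as well, not merely $a=q-1$). There the integer $n$ is counted separately from $0$ and stands for the residue $0$ in $T^*$. This is exactly what makes $|T^*|=|T|-1=n$ and the dimension $0$ come out right, so you should not discard $n$ when counting $T$.
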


Let
\[
\bar{u} = \overbrace{(q-1) \cdots (q-1) b \underbrace{0\cdots 0}_t}^{m}.
\]
Then
\[
u =(b+1)q^{m-t-1}-1.
\]
In this case, it is easy to see that for any integer $s\in [0, n]$,
\[
s\leq u \quad \text{if and only if} \quad \bar{s} \ll \bar{u}.
\]
Hence, $\C_T^*$ is a BCH code with design distance $\delta =u+1$, which we denote by $\C_{\delta}$. From Theorem~\ref{thm:dimCT}, we obtain the following corollary.

\begin{corollary}\label{cor:dimBCH}
Let $\C_\delta$ be a narrow-sense primitive BCH code of length $n = q^m - 1$ over $\mathbb{F}_q$ with design distance
\[
\delta = (b+1)q^{m-t-1},
\]
where $1 \leq b \leq q-1$ and $0 \leq t \leq m-1$. Then
\[
{\rm dim}_{\mathbb{F}_q}(\C_\delta)
=
q^m - 1 -
\sum_{(k,\ell) \in \Omega} \sum_{(r,s) \in \Omega}
(-1)^{r+s-k-\ell} \binom{r}{k} \binom{s}{\ell} A_{r,s},
\]
where
\[
\Omega = \left\{ (k,\ell) \,\, \middle| \,\,  k(t+1) + \ell(t+2) \leq m,\ (k,\ell) \neq (0,0),\ 0 \leq k, \ell \right\},
\]
and
\[
A_{r,s}
=
b^r (q-1-b)^s q^{m - r(t+1) - s(t+2)}
\cdot
\frac{m}{m - rt - s(t+1)}
\cdot
\frac{[m - rt - s(t+1)]!}{r! \, s! \, [m - r(t+1) - s(t+2)]!}.
\]
\end{corollary}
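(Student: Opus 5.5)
This corollary is obtained by specializing Theorem~\ref{thm:dimCT} to $a=q-1$. Three things must be checked: that the resulting cyclic code $\C_T^*$ is exactly $\C_\delta$ with $\delta=(b+1)q^{m-t-1}$; that the dimension formula carries over; and that $A_{r,s}$ reduces to the stated form.

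\textbf{Step 1: the value of $u$.} Set $a=q-1$, so $\bar u=(q-1)\cdots(q-1)\,b\,0\cdots0$ with $t$ trailing zeros in the high positions. Read as a $q$-adic expansion in the paper's convention (digit $u_i$ multiplies $q^i$), this gives
\[
u=\sum_{i=0}^{m-t-2}(q-1)q^i+bq^{m-t-1}=(b+1)q^{m-t-1}-1 .
\]

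\textbf{Step 2: $s\le u$ if and only if $\bar s\ll\bar u$.} The direction $\bar s\ll\bar u\Rightarrow s\le u$ is noted after Definition~\ref{def:partialorder}. For the converse, take $s\le u$.
\begin{itemize}
\item The digits $s_{m-t},\dots,s_{m-1}$ must vanish. Otherwise $s\ge q^{m-t}>u$, since $b+1\le q$.
\item Then $s<q^{m-t}$, so $s_{m-t-1}q^{m-t-1}\le s\le u<(b+1)q^{m-t-1}$, which forces $s_{m-t-1}\le b$.
\item The lower digits are automatically at most $q-1=u_i$.
\end{itemize}
Hence $\bar s\ll\bar u$. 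The edge case $t=m-1$, where $\bar u=b0\cdots0$, is covered by the same argument.

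\textbf{Step 3: identifying $T^*$ with the BCH defining set.} The set $\{s:\bar s\ll\bar u\}$ equals $[0,\delta-1]$. Therefore
\[
T^*=T\setminus\{0\}=\bigcup_{1\le s<\delta}\mathrm{cl}(s).
\]
Removing $0$ is legitimate because $\mathrm{cl}(s)\ni 0$ only when $s\equiv0$, and within $[1,n]$ the element $0$ is identified with $n$. By definition this is the defining set of the narrow-sense primitive BCH code $\C_\delta$, so $\C_T^*=\C_\delta$.

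\textbf{Step 4: substitution.} Apply Theorem~\ref{thm:dimCT}. Substituting $a=q-1$ turns $(a-b)^s$ into $(q-1-b)^s$ and $(a+1)^{m-r(t+1)-s(t+2)}$ into $q^{m-r(t+1)-s(t+2)}$, which is exactly the stated $A_{r,s}$.

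The only step needing care is Step 2, specifically the digit-ordering convention and the boundary case $b=q-1$. There $\delta=q^{m-t}$ and $y\in[b+1,a]$ is empty, but the $(q-1-b)^s$ factor is $0^s$. Under the convention $0^0=1$, this correctly kills every term with $s\ge1$.
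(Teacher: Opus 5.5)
Your proposal is correct and follows the same route as the paper. You specialize Theorem~\ref{thm:dimCT} to $a=q-1$, observe $u=(b+1)q^{m-t-1}-1$ and that $s\le u$ if and only if $\bar s\ll\bar u$ (so $T^*$ is exactly the defining set of $\C_\delta$), and then substitute $a=q-1$ into $A_{r,s}$; the only difference is that you spell out the digit-by-digit check of this equivalence, the removal of $0$, and the $b=q-1$ (via $0^0=1$) boundary, all of which the paper simply asserts as easy to see.
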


In conclusion, we have derived explicit dimension formulas for a class of affine-invariant codes and the related cyclic codes. To the best of our knowledge, this is the first time that the dimension of a cyclic code with a defining set of type $T^*$ has been determined explicitly. Ding et al.~in~\cite{Ding2015-1} gave a lower bound on the dimension of the BCH code in Corollary~\ref{cor:dimBCH}, whereas our result yields the exact dimension. Moreover, Corollary~\ref{cor:dimBCH} directly implies the related results in~\cite{Cherchem2020,Ding2015-1,Mann1962}. To the best of our knowledge, the known results on the dimensions of primitive BCH codes of length $q^m-1$ with design distance $\delta$ are summarized in Table~\ref{table:known_results}.

\begin{table}[h!]
\centering
{\small
\setlength{\tabcolsep}{0.8mm}
\caption{Known results on the dimension of $\mathcal{C}_{(q,m,\delta)}$}
\label{table:known_results}
 \begin{tabular}{cccc}  
\toprule
$m$ & $\delta$ & Dimension & Reference \\
\midrule
$m \geq 1$ & $\delta = q^t$ & Exact value & \cite{Mann1962} \\
\midrule
$m$ is odd & $2 \leq \delta \leq q^{(m+1)/2} + 1$ & Exact value & \cite{Yue1996} \\
\midrule
$m$ is even & $2 \leq \delta \leq 2q^{m/2} + 1$ & Exact value & \cite{Yue1996} \\
\midrule
\multicolumn{1}{c}{\tabincell{c}{$m \geq 1$}} & \multicolumn{1}{c}{\tabincell{c}{$\delta = (q - l_0)q^{m-l_1-1} - 1$\\with $0 \leq l_0 \leq q - 2$ and $0 \leq l_1 \leq m - 1$}} & \multicolumn{1}{c}{Lower Bound} & \multicolumn{1}{c}{\cite{Ding2015-1}} \\
\midrule
$m \geq 1$ & $\delta = q^t - 1$ & Exact value & \cite{Ding2015-1} \\
\midrule
$m \geq 1$ & \multicolumn{1}{c}{\tabincell{c}{$\delta = q^t + b$\\with $1 \leq b \leq \lfloor (q^t - 1)/q^{m-t}\rfloor + 1$}} & Exact value & \cite{Ding2015-2} \\
\midrule
$m \geq 1$ & $\delta = (q - 1)q^{m-1} - 1 - q^{\lfloor (m-1)/2\rfloor}$ & Exact value & \cite{Ding2017} \\
\midrule
$m \geq 4$ & $\delta = (q - 1)q^{m-1} - 1 - q^{\lfloor (m+1)/2\rfloor}$ & Exact value & \cite{Ding2017} \\
\midrule
$m$ is even & \multicolumn{1}{c}{\tabincell{c}{$\delta = k(q^{m/2} + 1)$\\with $1 \leq k \leq q - 1$}} & Exact value & \cite{Liu2017} \\
\midrule
$m \geq 4$ & $2 \leq \delta \leq q^{\lfloor (m+1)/2\rfloor + 1}$ & Exact value & \cite{Liu2017} \\
\midrule
$m = 2$ & $2 \leq \delta \leq q^m - 2$ & Exact value & \cite{Liu2017} \\
\midrule
$m > 1$ & \multicolumn{1}{c}{\tabincell{c}{$\delta = a(q^m - 1)/(q - 1)$ or $\delta = aq^{m-1} - 1$\\with $1 \leq a \leq q - 1$}} & Exact value & \cite{Cherchem2020} \\
\midrule
$m \geq 11$ & $\delta = (q - 1)q^{m-1} - 1 - q^{\lfloor (m+3)/2\rfloor}$ & Exact value & \cite{Moulou2023} \\
\midrule
$m > 1$ & \multicolumn{1}{c}{\tabincell{c}{$\delta = q^t + b$\\with $\lceil m/2 \rceil \leq t < m$ and \\$0 \leq b < q^{m-t} + \sum_{i=1}^{\lfloor t/s\rfloor - 1} q^{m+i(m-t)}$}} & Exact value & \cite{Gan2024} \\
\midrule
$m>1$ & \multicolumn{1}{c}{\tabincell{c}{$\delta = q^t - b$\\with $\lceil m/2 \rceil < t < m$ and \\$0 \leq b < (q - 1) \sum_{i=1}^{m-t} q^i$}} & Exact value & \cite{Gan2024} \\
\midrule
$m\geq 4$ & $2\leq \delta \leq q^{\lfloor \frac{2m-1}{3}\rfloor +1}$ & Exact value & \cite{RunZheng} \\
\midrule
$m \geq 1$ & \multicolumn{1}{c}{\tabincell{c}{$\delta = (b+1)q^{m-t-1}$\\with $1 \leq b \leq q - 1$ and $0 \leq t \leq m - 1$}} & Exact value & Corollary~\ref{cor:dimBCH} \\
\bottomrule
 \end{tabular}}
\end{table}

\section{The minimum distances of a class of affine-invariant codes and related cyclic codes}\label{sec:distanceofcodes}

In this section, we study the minimum distances of the dual codes of the affine-invariant codes and the related cyclic codes introduced in Section~\ref{sec:dimensionofcodes}. Our results generalize several previously known results. To this end, we give some definitions.

Let $1 \leq a \leq q-1$, $1 \leq b \leq q-1$, $m \geq 2$, $0 \leq t \leq m-1$, and $n = q^m - 1$. Let
\[
\bar{u} = \overbrace{a\cdots a b \underbrace{0\cdots 0}_t}^{m},
\qquad
T = \bigcup_{\bar{s} \ll \bar{u}} \mathrm{cl}(s).
\]
Let $\C_T$ and $\C_T^*$ be defined as in Section~\ref{sec:dimensionofcodes}, where $T$ and $T^* = T \setminus \{0\}$ are the defining sets of $\C_T$ and $\C_T^*$, respectively.
To investigate the minimum distance of the dual code of $\C_T$, we first recall the following known result.

\begin{lemma}[Theorem 1 in \cite{Charpin1994}]\label{lem:dualdefset}
Let $\C$ be an extended cyclic code in $\mathcal{A}$ with defining set $T$, and denote by $T^\perp$ the defining set of the dual of $\C$.
Then
\[
T^\perp=\{s\in [0, n]\,\,|\,\, n-s \notin T \}.
\]
\end{lemma}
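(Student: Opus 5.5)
The statement to prove is Lemma~\ref{lem:dualdefset}: for an extended cyclic code $\C\subseteq\mathcal{A}$ with defining set $T$, the dual has defining set $T^\perp=\{s\in[0,n]\mid n-s\notin T\}$. The plan is a dimension-count-plus-orthogonality argument carried out in the group algebra $\mathcal{A}$.

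The first step is to record a duality for the pairing $\langle \mathbf{x},\mathbf{y}\rangle=\sum_{g\in G}x_gy_g$ on $\mathcal{A}$. I will use the family of monomial-evaluation vectors $\mathbf{e}_s=\sum_{g\in G} g^{s}X^g$ for $s\in[0,n]$, with the convention $0^0=1$. Since $\sum_{g\in G}g^{j}$ equals $0$ unless $j=0$ is interpreted via $0^0$, or $j$ is a positive multiple of $n$, we get
\[
\langle \mathbf{e}_s,\mathbf{e}_r\rangle=\sum_{g\in G} g^{s+r}=-1 \ \text{ if } s+r=n,\quad 0 \text{ otherwise, for } s,r\in[0,n] \text{ with } (s,r)\neq(n,n).
\]
The case $s=r=n$ is special and gives $q^m-1=-1$ as well. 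So the pairing of $\mathbf{e}_s$ with $\mathbf{e}_r$ is nonzero exactly when $r=n-s$, modulo a routine check at the endpoints $0$ and $n$.

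The second step is to set up the correspondence between codes and defining sets. Working over $\bF_{q^m}$ (then descending by Galois invariance, since $T$ is a union of cyclotomic cosets), the vectors $\mathbf{e}_s$ form a basis of $\bF_{q^m}^{q^m}$ by Vandermonde/Lagrange interpolation. The condition $\rho_s(\mathbf{c})=\sum_g c_g g^s=0$ is just $\langle \mathbf{c},\mathbf{e}_s\rangle=0$. Hence $\C\otimes\bF_{q^m}=\mathrm{span}\{\mathbf{e}_s: s\in T\}^\perp$. By the orthogonality computed in the first step, this is $\mathrm{span}\{\mathbf{e}_r: n-r\notin T\}$, and consequently $\dim \C=q^m-|T|$.

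The third step identifies the dual. The dual code is $\mathrm{span}\{\mathbf{e}_s:s\in T\}$ (descended to $\bF_q$). An element $\mathbf{e}_s$ with $s\in T$ is annihilated by $\rho_r$ exactly when $\langle \mathbf{e}_s,\mathbf{e}_r\rangle=0$, i.e. when $r\ne n-s$. Therefore $\rho_r$ vanishes on all of $\C^\perp$ iff $n-r\notin T$, giving $T^\perp=\{r\mid n-r\notin T\}$. Maximality of this set is automatic because the $\mathbf{e}_r$ are independent.

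The main obstacle is the boundary bookkeeping at $s\in\{0,n\}$, where $0^0=1$ and $g^n=1$ for $g\ne0$ but $0^n=0$ make $\mathbf{e}_0$ and $\mathbf{e}_n$ differ. I would check directly that $\langle\mathbf{e}_0,\mathbf{e}_n\rangle=-1$, $\langle\mathbf{e}_0,\mathbf{e}_0\rangle=q^m=0$ and $\langle\mathbf{e}_n,\mathbf{e}_n\rangle=-1\cdot$... Since the last value is $q^m-1$, this is where care is needed. If it breaks the clean pairing, I would instead use the basis change $\mathbf{e}_n-\mathbf{e}_0$ (the indicator of $0$), or note that the extended code always contains $0\in T$, which keeps the pairing triangular so the conclusion still holds.
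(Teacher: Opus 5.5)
The paper gives no proof of this lemma; it is quoted from Charpin and Levy-dit-Vehel, so your argument has to stand on its own. It essentially does. Expanding in the Vandermonde basis $\mathbf{e}_s=\sum_{g\in G}g^sX^g$, obtaining $\C^\perp\otimes\bF_{q^m}=\mathrm{span}\{\mathbf{e}_s: s\in T\}$ by nondegeneracy and Galois descent, and reading $T^\perp$ off the Gram matrix is a complete proof, once the boundary point you left conditional is settled. (The second identification in Step~2, $\C\otimes\bF_{q^m}=\mathrm{span}\{\mathbf{e}_r: n-r\notin T\}$, is not needed for Step~3, and it rests on the same point below.)

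The entry $\langle\mathbf{e}_n,\mathbf{e}_n\rangle=q^m-1=-1$ does break the clean pairing. The Gram matrix is $-1$ on the antidiagonal $s+r=n$, plus an extra $-1$ at $(n,n)$. Hence the ``i.e.\ when $r\neq n-s$'' in Step~3, and the sentence closing Step~1, are false for $s=r=n$. What your computation actually gives is
\[
r\in T^\perp \iff n-r\notin T \ \text{ and not } \bigl(r=n \text{ and } n\in T\bigr).
\]
The second clause is redundant precisely because $0\in T$ for an extended cyclic code: $r=n$ is already excluded by $n-n=0\in T$. Concretely, $\mathbf{e}_0\in\C^\perp\otimes\bF_{q^m}$ and $\langle\mathbf{e}_0,\mathbf{e}_n\rangle=-1$.

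You must invoke this hypothesis, not just list it as one option. The code $\{\mathbf{c}\in\mathcal{A}: \sum_{g\neq 0}c_g=0\}$ is invariant under $X^g\mapsto X^{\alpha g}$ but is not an extended cyclic code. It has $T=\{n\}$ and $T^\perp=[1,n-1]$, whereas the formula predicts $[1,n]$.

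Your other fix, replacing $\mathbf{e}_n$ by $\mathbf{e}_n-\mathbf{e}_0=-X^0$, is legitimate only inside the spanning set $\{\mathbf{e}_s: s\in T\}$ of $\C^\perp\otimes\bF_{q^m}$. There it again needs $\mathbf{e}_0$ to be present, i.e.\ $0\in T$. It cannot be applied to the test vectors $\mathbf{e}_r$, which are fixed by the definition of $T^\perp$.

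A minor slip: $\sum_{g\in G}g^0=q^m=0$, so $j=0$ is not an exception in your description of $\sum_g g^j$. The Gram values you display are nonetheless correct.
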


Applying Lemma~\ref{lem:dualdefset}, we obtain the following characterization of the defining set of the dual code $\C_T^\perp$.
\begin{lemma}\label{lem:defTperp}
Let $u$, $T$, and $\C_T$ be as defined above.
Then the defining set of the dual code $\C_T^\perp$ is
\[
T^\perp = \left\{ w \in [n] \ \middle| \
\begin{array}{l}
\bar{w}\,\text{contains no cyclic subsequence of the form} \\
x_1\,\cdots x_{m-t-1} y\,\underbrace{q-1\,\cdots\,q-1}_{t},\ \text{with} \\
q-1-a \leq x_i \leq q-1 \ \text{and} \ q-1-b \leq y \leq q-1
\end{array} \right\}.
\]
\end{lemma}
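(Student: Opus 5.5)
We apply Lemma~\ref{lem:dualdefset} directly. It gives $w\in T^\perp$ if and only if $n-w\notin T$, so the task reduces to describing the complement of $T$ under the map $w\mapsto n-w$ on $[0,n]$.

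The key observation is that $n=q^m-1$ has $q$-adic expansion $(q-1)(q-1)\cdots(q-1)$. Subtracting $w$ from $n$ therefore involves no borrows, and works digitwise:
\[
\overline{n-w}=(q-1-w_0)(q-1-w_1)\cdots(q-1-w_{m-1}).
\]
This complementation commutes with cyclic shifts. It also reverses the partial order: $\bar s\ll\bar v$ if and only if $\overline{n-v}\ll\overline{n-s}$.

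Next we use the description of $T$ from the proof of Lemma~\ref{lem:defT}, which avoids the finer case analysis. There, $s\in T$ if and only if $\bar s\ll\overline{uq^{j}}$ for some $j\in[0,m-1]$. This amounts to saying that some cyclic window of $\bar s$, of length $m$ starting at position $j$, is digitwise at most the pattern
\[
\underbrace{a\cdots a}_{m-t-1}\,b\,\underbrace{0\cdots0}_{t}.
\]
This also covers $s=0$. Hence $n-w\in T$ if and only if some cyclic shift of $\bar w$ is digitwise at least
\[
\underbrace{(q-1-a)\cdots(q-1-a)}_{m-t-1}\,(q-1-b)\,\underbrace{(q-1)\cdots(q-1)}_{t},
\]
because complementing $s_i\le a$ gives $w_i\ge q-1-a$, $s_i\le b$ gives $w_i\ge q-1-b$, and $s_i\le 0$ gives $w_i=q-1$.

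Since the pattern has full length $m$, a cyclic shift of $\bar w$ meeting these bounds is exactly a cyclic subsequence of length $m$ of the stated form $x_1\cdots x_{m-t-1}\,y\,(q-1)\cdots(q-1)$, with $q-1-a\le x_i\le q-1$ and $q-1-b\le y\le q-1$. Negating this condition gives the claimed description of $T^\perp$.

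The only delicate point is bookkeeping with conventions. The paper's cyclic shift by $q^j$ is a right shift of the digit string, so we need a careful statement that ranging over all $j$ corresponds to ranging over all starting positions of cyclic subsequences of length $m$. We also check the boundary cases. For $w=n$, its complement is $0\in T$, so $n\notin T^\perp$; consistently, $\bar n$ consists of all digits $q-1$ and contains the forbidden pattern. For $t=m-1$, the block of $x_i$'s is empty, and the argument is unchanged.
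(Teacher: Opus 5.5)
Your proposal is correct and follows essentially the same route as the paper. Both arguments apply Lemma~\ref{lem:dualdefset}, describe $T$ as the set of $s$ for which some cyclic rotation of $\bar s$ is digitwise below $\bar u$ (equivalently, $\bar s\ll\overline{uq^j}$ for some $j$), and use that $s\mapsto n-s$ replaces each digit by its complement with respect to $q-1$. Your version is slightly cleaner: you take the full-length rotation description from the first step of the proof of Lemma~\ref{lem:defT}, which is exactly what the complementation needs, whereas the paper attributes that description to the lemma's final statement.
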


\begin{IEEEproof}
It is known that
\[
T = \bigcup_{\bar{s} \ll \bar{u}} \mathrm{cl}(s), \quad \bar{u} = \overbrace{a\cdots a b \underbrace{0\cdots 0}_t}^{m}.
\]
Equivalently, by Lemma~\ref{lem:defT}, the set $T$ consists precisely of those $w\in [0, n]$ whose $q$-adic expansion contains a cyclic subsequence of the form
\[
x_1 \cdots x_{m-t-1} y \underbrace{0 \cdots 0}_{t},\quad x_i \in [0, a],\quad y \in [0, b].
\]
By Lemma~\ref{lem:dualdefset}, the defining set of $\C_T^\perp$ is
\[
T^\perp = \left\{  s \in [0, n]\, \mid\, n - s \notin T \right\}
       = \left\{  n - s \in [0, n] \, \mid\, s \notin T \right\}.
\]

Now note that passing from $s$ to $n-s=q^m-1-s$ amounts to replacing each digit of the $q$-adic expansion of $s$ by its complement with respect to $q-1$.
Therefore, the condition that $s$ contains a cyclic subsequence
$x_1 \cdots x_{m-t-1} y \underbrace{0\cdots 0}_t$, \,\, $x_i\in [0,a],\,\, y\in [0,b]$, is equivalent to the condition that $n-s$ contains a cyclic subsequence of the form
\[x_1'\cdots x_{m-t-1}' y' \underbrace{q-1 \cdots q-1}_t, \,\, q-1-a\le x_i' \le q-1,\,\, q-1-b\le y' \le q-1. \]
Hence, $n-s\notin T$ if and only if its $q$-adic expansion contains no such cyclic subsequence. This proves the stated characterization of $T^\perp$.
\end{IEEEproof}

By applying the Roos bound~\cite{Roos1983}, Levy-dit-Vehel obtained a lower bound on the minimum distance of affine-invariant codes via their defining sets, as follows.

\begin{lemma}[Theorem 4 in \cite{Levy1995}]\label{lem:Nbound}
Let $\C$ be an affine-invariant code of length $q^m$ over $\mathbb{F}_q$. Assume that its defining set $\bar{T}$ satisfies:
\begin{itemize}
\item $[0, v) \subseteq \bar{T}$ for some $v\in [n]$;
\item there exists an integer $z$ with $\gcd(z, q^m-1) = 1$, and a set $S \subseteq [n]$ with $0 \notin S$, such that for all $s \in S$,  $[sz, sz + v) \subseteq \bar{T}$;
\item $\max S - \min S - |S| + 1 < v$.
\end{itemize}
Then the minimum distance of $\C$ satisfies
\[
d_{\min}(\C) \geq v + |S| + 1.
\]
\end{lemma}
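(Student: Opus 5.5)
The plan is to deduce the statement from the classical Roos bound for cyclic codes, applied to the punctured cyclic code $\C^*$, and then to recover the extra $+1$ from affine invariance. Write $\bar{T}^*=\bar{T}\setminus\{0\}$ for the defining set of $\C^*$, read modulo $n$, so that the zeros of $\C^*$ are $\alpha^s$ for $s\in\bar{T}^*$. I will use the Roos bound in the following form. Suppose the defining set of a cyclic code of length $n$ contains $\bigcup_{s\in M}(N+zs)$, where $N$ is a run of $v$ consecutive integers, $\gcd(z,n)=1$, and $M$ is a finite set of integers with $\max M-\min M+1-|M|<v$. Then the minimum distance is at least $|N|+|M|=v+|M|$. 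Roos' theorem in its general form allows $N+zs$ rather than exactly $N$ shifted by multiples; since $\gcd(z,n)=1$, one can rescale by $z^{-1}$ modulo $n$ to bring it to the standard shape.

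\textbf{Step 1 (cyclic code bound).} Take $N=[0,v)$ and $M=S$. The second hypothesis says precisely that $sz+N\subseteq\bar{T}$ for every $s\in S$, and the third hypothesis is exactly the Roos gap condition for $M=S$. The remaining point is that these index sets are zeros of $\C^*$, that is, they lie in $\bar{T}^*$ read modulo $n$. Since $0\notin S$, $s\in[1,n]$ and $\gcd(z,n)=1$, we have $sz\not\equiv 0\pmod n$. The only issue is an interval $[sz,sz+v)$ that wraps past $n$, so that some element is $\equiv 0 \pmod n$. For the cyclic code the zero $\alpha^0=1$ corresponds to the exponent $n$, not to the extended-coordinate index $0$. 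I would handle this by noting that affine invariance, via Kasami's criterion $\Delta(\bar{T})=\bar{T}$, together with the fact that a wrapping interval forces a large element of $\bar{T}$, gives $n\in\bar{T}$ in that case. Alternatively one can show the wrap does not occur under the hypotheses. \emph{I expect this bookkeeping, distinguishing $0$ from $n$ for the extended code, to be the main obstacle.} Granting it, Roos gives $d_{\min}(\C^*)\ge v+|S|$.

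\textbf{Step 2 (passing to the extended code).} Let $\mathbf{c}\in\C$ have minimum weight $d$. The affine group acts transitively on coordinates, so applying some $\sigma_{1,v'}$ we may assume that the coordinate $X^0$ lies in the support of $\mathbf{c}$. Deleting this coordinate yields a codeword of $\C^*$, because $\C^*$ is the puncturing of $\C$ at $0$. This punctured word is nonzero, since a weight-one word of $\C$ would contradict the parity relation defining $\C$. Its weight is $d-1$, hence $d-1\ge d_{\min}(\C^*)\ge v+|S|$, giving $d_{\min}(\C)\ge v+|S|+1$ as claimed.
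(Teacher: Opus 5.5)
Your overall route is the expected one: the paper gives no proof of this lemma and simply quotes Levy-dit-Vehel, who derives it from the Roos bound. Your Roos instance with $N=[0,v)$, $M=S$ is fine, and Step~2 is correct: translations act transitively on coordinates, and puncturing a minimum-weight word of $\C$ at a support coordinate gives a nonzero word of $\C^*$ of weight $d-1$.

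The gap is in Step~1. The one thing that must be checked is that every $\alpha^{sz+i}$ ($s\in S$, $0\le i<v$) is a zero of $\C^*$, and you only grant it. The route you sketch for it cannot work. Kasami's criterion closes $\bar T$ \emph{downward} under $\ll$, and $n$ (all digits $q-1$) is the maximum for $\ll$, so no other element of $\bar T$ can force $n\in\bar T$. Indeed, $n\in\bar T$ would mean $\bar T=[0,n]$, i.e.\ $\C=\{\mathbf 0\}$. Your claim $sz\not\equiv 0\pmod n$ also fails for $s=n\in[1,n]$, and that is exactly where the trouble sits.

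This is not bookkeeping but a question of how the hypothesis is read, and under the modular reading the statement is false. Suppose $[sz,sz+v)$ is reduced mod $n$ with residue $0$ matched to the index $0\in\bar T$. Take the extended binary Hamming code ($q=2$, $m\ge 3$, $\bar T=\{0\}\cup\mathrm{cl}(1)$, minimum distance $4$) with $v=3$, $z=1$, $S=\{n\}$. Then $[n,n+3)\equiv\{0,1,2\}\subseteq\bar T$ and the gap condition reads $0<3$, so the lemma would give $d\ge 5$.

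The reading that makes the lemma true is that $[sz,sz+v)$ is an interval of integers contained in $\bar T\subseteq[0,n]$. This is also the reading in the paper's applications, which check the $m$-digit expansion of the integer $sz+w$. With it, Step~1 is one line:
\begin{itemize}
\item $sz\in[0,n]$ with $s\ge 1$ and $z\ne 0$ (as $\gcd(z,n)=1$, $n>1$) forces $sz\ge 1$;
\item so each $sz+i$ lies in $\bar T\cap[1,n]=\bar T\setminus\{0\}$, and $\alpha^{sz+i}$ is a zero of $\C^*$ (the value $n$ gives the genuine zero $\alpha^n=1$);
\item no wrap-around is possible.
\end{itemize}

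Finally, Roos needs $M\neq\emptyset$, so the case $S=\emptyset$ must be handled separately; the paper uses it in Cases~6 and~11. There $[1,v)\subseteq\bar T\setminus\{0\}$ gives $d_{\min}(\C^*)\ge v$ by the BCH bound. This is the only place the hypothesis $[0,v)\subseteq\bar T$ is actually needed.
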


Next, we use Lemma~\ref{lem:Nbound} to derive lower bounds on the minimum distance of the dual code $\C_T^{\perp}$ of the affine-invariant code $\C_T$. To obtain a strong lower bound on the minimum distance of
$\C_T^{\perp}$, we first determine the largest possible integer $v\in [n]$ such that $[0,v) \subseteq T^{\perp}$ but $v\notin T^{\perp}$.
For the defining set $T^{\perp}$ given in Lemma~\ref{lem:defTperp}, we have the following result.

\begin{lemma}\label{lem:valueof}
Let $T^{\perp}$ be defined as in Lemma~\ref{lem:defTperp}. If an integer $v$ satisfies $[0,v) \subseteq T^{\perp}$ and $v\notin T^{\perp}$, then
\[
v =\begin{cases}
q^t-1, & a=b=q-1; \\[2pt]
q^{t+1}-1-b, & a=q-1,\ b\neq q-1,\ m\geq t+2; \\[2pt]
q^{t+1}-bq^t-1, & a=q-1,\ b\neq q-1,\ m= t+1; \\[2pt]
q^t\Bigl[(q-1-a)\sum_{i=0}^{m-t-2}q^i +1\Bigr]+q^{m-1}(q-1-b)-1,&a\neq q-1,\ b\geq a;\\[4pt]
q^{t+1}\Bigl[(q-1-a)\sum_{i=0}^{m-t-2}q^i +1\Bigr]-1-b,&a\neq q-1,\ b < a.
\end{cases}
\]
\end{lemma}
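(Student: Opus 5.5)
The plan is to recast the statement as a minimisation. Let $v$ be the smallest element of $[n]\setminus T^\perp$; then $[0,v)\subseteq T^\perp$ and $v\notin T^\perp$, and this $v$ is the only integer with both properties. By Lemma~\ref{lem:defTperp}, $w\notin T^\perp$ exactly when $\bar w$ contains a cyclic subsequence $x_1\cdots x_{m-t-1}\,y\,(q-1)^t$ with $x_i\ge q-1-a$ and $y\ge q-1-b$. This pattern has length exactly $m$, so it occupies the whole cyclic word. Hence $w\notin T^\perp$ if and only if some cyclic shift of $\bar w$ dominates, digitwise, the fixed word
\[
\bar p=\underbrace{(q-1-a)\cdots(q-1-a)}_{m-t-1}\,(q-1-b)\,\underbrace{(q-1)\cdots(q-1)}_{t}.
\]
Lowering a digit lowers the integer, and the lowered word still dominates $\bar p$ in the same shift. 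Therefore
\[
v=\min_{0\le j\le m-1}\bigl(p\,q^{j} \bmod n\bigr),
\]
with the single exception $p=n$, which does not arise here because $b\ge 1$. So $v$ is the smallest element of $\mathrm{cl}(p)$, the coset leader of $p$.

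Next I would list the rotations concretely. As in the proof of Lemma~\ref{lem:defT}, write each rotation by where the cyclic block ``$x$'s, then $y$, then $(q-1)^t$'' lands in positions $0,\dots,m-1$, where position $i$ carries weight $q^i$. The minimal rotation is found greedily: the top digit $s_{m-1}$ should be as small as possible, then $s_{m-2}$, and so on, so the rotation is the lexicographically smallest read from the most significant digit. The candidates are essentially three. The first starts the word at $y$, giving $y$ at position $0$, the $(q-1)$-block at positions $1,\dots,t$, and the $x$'s above. The second starts at the $(q-1)$-block, putting it at positions $0,\dots,t-1$, the $x$'s at positions $t,\dots,m-2$, and $y$ on top. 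The third starts somewhere inside the run of $x$'s, and it only matters when $q-1-a\neq 0$.

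I would then split into the cases of the lemma according to which digits of $\bar p$ vanish.
\begin{itemize}
\item If $a=b=q-1$, then $\bar p=0^{m-t}(q-1)^t$. The second rotation is clearly minimal and gives $q^t-1$.
\item If $a=q-1$ and $b<q-1$, the $x$'s are $0$ and $y=q-1-b>0$. For $m\ge t+2$ the first rotation gives $q^{t+1}-1-b$, while the second gives $q^t-1+(q-1-b)q^{m-1}$, which is larger because $(q-1-b)q^{m-1}\ge q^{t+1}$.
\item If $a=q-1$, $b<q-1$ and $m=t+1$, there are no $x$'s and only the two rotations remain. I would recompute this case carefully against the stated formula $q^{t+1}-bq^t-1$. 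A quick check suggests the first rotation gives $q^{t+1}-1-b$, so I expect to have to track the digit-order convention, or the convention for $t=m-1$, closely here.
\item If $a\ne q-1$, all digits of $\bar p$ are nonzero. The comparison is then between the top digits $q-1-a$ (an $x$) and $q-1-b$ (the $y$). If $b\ge a$, then $q-1-b\le q-1-a$, so placing $y$ on top is best, i.e.\ the second rotation. This gives $q^t\bigl[(q-1-a)\sum_{i=0}^{m-t-2}q^i+1\bigr]+q^{m-1}(q-1-b)-1$ after writing $q^t-1$ for the low block. If $b<a$, an $x$ must be on top; the best choice keeps the remaining $x$'s high and places $y$ at position $0$, i.e.\ the first rotation. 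Summing the geometric series gives the last formula.
\end{itemize}

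The main obstacle is making the minimisation over rotations rigorous rather than heuristic, together with the boundary cases $m=t+1$ and $t=0$. For the rigorous part I would compare the most significant digit of each rotation first. Every rotation other than the claimed one either has a strictly larger top digit, or agrees on the top digits and then places a $(q-1)$, or a larger digit, strictly higher. Any such difference at position $i$ outweighs everything in positions below $i$, since $\sum_{j<i}(q-1)q^j<q^i$. For the boundary cases I would verify the closed forms directly, including against the example $q=3$, $m=4$, $t=1$ from the earlier section.
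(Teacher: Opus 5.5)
Your reduction is correct and takes a different route from the paper. The paper checks $v\notin T^\perp$, proves $[0,v)\subseteq T^\perp$ only for $a\neq q-1$, $b<a$ by asserting that every $\bar w$ containing a forbidden pattern is at least $v$, and calls the other cases similar. You observe instead that the forbidden pattern has length exactly $m$. So $[0,n]\setminus T^\perp$ is the set of words dominating some rotation of $\bar p=\overline{n-u}$, and $v$ is the coset leader of $n-u$. This turns every case into a finite lexicographic comparison of rotations, which is cleaner and actually checkable. It gives the right values for $a=b=q-1$, for $a=q-1$ with $m\ge t+2$, and for $a\neq q-1$, $b\ge a$.

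The gap is the boundary $m=t+1$, which you leave open and misdiagnose. For $a=q-1$, $b\neq q-1$, $m=t+1$ there is no convention problem. There are $t+1$ rotations, not two. Every digit of $\bar p=(q-1-b)(q-1)^t$ except $y=q-1-b$ equals $q-1$, so your own top-digit rule puts $y$ at position $m-1=t$. The word $(q-1)^t\,(q-1-b)$ has value $(q^t-1)+(q-1-b)q^t=q^{t+1}-bq^t-1$. For $t\ge 1$ this is strictly smaller than the first rotation's $q^{t+1}-1-b$, whose top digit is $q-1$. The first rotation wins for $m\ge t+2$ only because its top digit is an $x$-digit, and here there are no $x$'s. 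So the third line of the lemma is correct.

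The same point breaks your argument for $a\neq q-1$, $b<a$: ``an $x$ must be on top'' needs $m\ge t+2$. For $m=t+1$ the minimum is again $q^{t+1}-bq^t-1$, so the lemma's last formula is false when $m=t+1$ and $t\ge 1$. Concretely, take $q=4$, $m=2$, $t=1$, $a=2$, $b=1$:
\begin{itemize}
\item then $\bar u=1\,0$ and $T=\{0,1,4\}$;
\item hence $11=15-4\notin T^\perp$;
\item but the formula predicts $v=14$.
\end{itemize}
This fits the fact that $T$ does not depend on $a$ when $t=m-1$. Both the third line ($a=q-1$) and the $b\ge a$ line give $q^{t+1}-bq^t-1$ at $m=t+1$. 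The paper's proof has the same hole: $(q-1)^t(q-1-b)$ contains the forbidden pattern yet lies below its claimed $v$.

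So your boundary check should conclude that the last case needs $m\ge t+2$, with $v=q^{t+1}-bq^t-1$ when $m=t+1$; it cannot confirm the formula. One minor slip: for $a\neq q-1$ not all digits of $\bar p$ are nonzero, since $y=0$ when $b=q-1$. This does not affect the minimisation.
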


\begin{IEEEproof}
It is straightforward to verify that, in each of the above cases, $v\notin T^{\perp}$. Therefore, it remains to prove that $[0,v) \subseteq T^{\perp}$.
By the definition of $T^{\perp}$, this is equivalent to showing that for any $w \in [0, v)$, the $q$-adic expansion $\bar{w}$ contains no cyclic subsequence of the form
\[
x_1 \cdots x_{m-t-1} \, y \, \underbrace{q-1 \cdots q-1}_{t},
\]
where $q-1-a \leq x_i \leq q-1$ and $q-1-b \leq y \leq q-1$.

The structure of the set $T^{\perp}$ depends significantly on the parameters $a$, $b$, $m$, and $t$.
In particular, different parameter choices determine the maximal interval $[0,v)$ contained in $T^\perp$.
For this reason, we need to discuss $v$ according to the five cases listed in the statement.
In what follows, we present the proof only for the representative case $a \neq q-1$ and $b < a$; the remaining cases can be verified in a similar manner.

Assume now that $a \neq q-1$ and $b < a$.
In this case,
\[
v = q^{t+1}\Bigl[(q-1-a)\sum_{i=0}^{m-t-2} q^i + 1\Bigr] - 1 - b.
\]
Its $q$-adic expansion is
\[
\bar{v}=(q-1-b)\,\underbrace{q-1\,\cdots\,q-1}_{t}\, \underbrace{(q-1-a)\,\cdots\,(q-1-a)}_{m-t-1}.
\]
Hence $\bar{v}$ contains a cyclic subsequence of the forbidden form, and therefore $v \notin T^{\perp}$.

Now let $w \in [0, v)$ with $q$-adic expansion $\bar{w} = w_1 \cdots w_m$.
Since $w < v$, the first position at which $\bar{w}$ differs from $\bar{v}$ in lexicographic order must be strictly smaller than the corresponding digit of $\bar{v}$.
In particular, if a forbidden cyclic subsequence occurs in $\bar{w}$, then the corresponding digits force $w\ge v$, contradicting $w<v$. Indeed, if $\bar{w}$ contains no block of
$t$ consecutive digits equal to $q-1$, then clearly $w\in T^{\perp}$. So it suffices to consider the case where $\bar{w}$ does contain such a block.

There are then only two possible configurations for a forbidden cyclic subsequence:
\begin{itemize}
    \item $\bar{w} = \underbrace{q-1 \cdots q-1}_{t} w_{t+1} \cdots w_m$, where $q-1-a \leq w_{t+1}, \dots, w_{m-1} \leq q-1$ and $q-1-b \leq w_m \leq q-1$;
    \item $\bar{w} = w_1 \cdots w_{\ell} \underbrace{q-1 \cdots q-1}_{t} w_{\ell+t+1} \cdots w_m$ with $\ell \in [1, m-t-1]$, where
          $q-1-a \leq w_{\ell+t+1}, \dots, w_m, w_1, \dots, w_{\ell-1} \leq q-1$ and $q-1-b \leq w_{\ell} \leq q-1$.
\end{itemize}

In either configuration, the prescribed lower bounds on the digits imply that $w$ is at least the integer whose $q$-adic expansion is exactly
\[
(q-1-b)\,\underbrace{q-1\,\cdots\,q-1}_{t}\, \underbrace{(q-1-a)\,\cdots\,(q-1-a)}_{m-t-1},
\]
namely $v$. This contradicts the assumption $w<v$.
Therefore $\bar{w}$ cannot contain any forbidden cyclic subsequence, and hence $w\in T^{\perp}$.

Thus $[0,v)\subseteq T^\perp$ and $v\notin T^\perp$, which completes the proof.
\end{IEEEproof}

In the following, we construct a suitable set $S$ and an integer $z$ coprime to $q^m-1$, thereby deriving a lower bound on the minimum distance.

\begin{theorem}\label{thm:CTBound}
Let the affine-invariant code $\C_T$ and the related cyclic code $\C_T^*$ be defined as above. Then their dual codes have the same minimum distance~$d$, which satisfies the following lower bounds.

\noindent{\rm (1)} If $a = q-1$, $b \neq q-1$, then
\[
d \geq
\begin{cases}
q^{t+2}-qb-q,& m\geq 2t+5,\ t\geq 1;\\[2pt]
(q-b)(q^{t+1}-1-b),& m=2t+4,\ t\geq 1;\\[2pt]
2q^{t+1}-(b+1)q^t-1-b,& m=2t+3,\ t\geq 1;\\[2pt]
(q-1-b)(q-1)q^{m-t-3}+q^{t+1}-1-b,& m\in [t+3,2t+2],\ t\geq 1;\\[2pt]
q^{t+1}+q-2-2b,&m=t+2;\\[2pt]
(q-b)q^t,&m=t+1;\\[2pt]
\min\{ \lceil \frac{q}{b+1} \rceil ,q-b\}(q-1-b), & m\geq 3,\ t=0.
\end{cases}
\]

\noindent{\rm (2)} If $a = b = q-1$, $t\geq 1$, then
\[
d \geq
\begin{cases}
q^{t+1}-q+1, &  m \geq 2t+2; \\[2pt]
q^{m-t}-2q^{m-t-1}+q^{m-t-2}+q^t-1,&m \in [t+2,2t+1];\\[2pt]
q^t+q-2,&m=t+1.
\end{cases}
\]

\noindent{\rm (3)} If $a\neq q-1$, then
\[
d \geq
\begin{cases}
q^t\Bigl[(q-1-a)\sum_{i=0}^{m-t-2}q^i +1\Bigr]+q^{m-1}(q-1-b),&b\geq a;\\[4pt]
q^{t+1}\Bigl[(q-1-a)\sum_{i=0}^{m-t-2}q^i +1\Bigr]-b,&b < a.
\end{cases}
\]
\end{theorem}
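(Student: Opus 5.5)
The equality of the two dual distances follows from Lemma~\ref{lem:linkd}, since $\C_T$ is affine-invariant by the Kasami et al.\ criterion ($T$ is closed under $\ll$ by construction). All bounds will then be obtained by applying Lemma~\ref{lem:Nbound} to $\C_T^\perp$ with defining set $\bar T=T^\perp$, described digitwise in Lemma~\ref{lem:defTperp}. In every case the first ingredient is the integer $v$ of Lemma~\ref{lem:valueof}, which gives $[0,v)\subseteq T^\perp$. With $S=\emptyset$ this already yields the BCH-type bound $d\ge v+1$. This is exactly the claimed bound in part (3), in both subcases $b\ge a$ and $b<a$. It is also the claimed bound in case (1) with $m=t+1$, since $q^{t+1}-bq^t-1+1=(q-b)q^t$. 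So the real work is producing a nonempty set $S$ and a multiplier $z$ coprime to $q^m-1$ in the remaining cases.

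For these cases I will take $z$ to be a power of $q$, or $z=1$, so that $\gcd(z,q^m-1)=1$ is automatic. I will then look for an arithmetic-progression-like set $S$ of multipliers $s$ such that the translated window $[sz,sz+v)$ avoids the forbidden pattern $x_1\cdots x_{m-t-1}\,y\,(q-1)^t$ of Lemma~\ref{lem:defTperp}.

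Concretely, in case (1) ($a=q-1$) the pattern is forbidden only when $y\ge q-1-b$ and it is followed cyclically by $t$ digits equal to $q-1$. Here $v=q^{t+1}-1-b$ fills the low $t+1$ digits. I will choose shifts $sz=jq^{t+1}$, or a multiple placing a nonzero digit in higher positions, for $j$ ranging in a set $S$. The point is that each element of $[sz,sz+v)$ has its low block bounded by $v$ and its higher digits fixed and small. For such an element, a run of $t$ digits equal to $q-1$ either is impossible or is preceded by a digit $<q-1-b$, so it stays in $T^\perp$. The size of $S$ is limited by two things: the available room $m-2(t+1)$ for the higher digits, which produces the thresholds $m\ge 2t+5$, $m=2t+4$, $m=2t+3$ and $m\in[t+3,2t+2]$; and the gap condition $\max S-\min S-|S|+1<v$. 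Choosing $|S|$ maximal subject to these constraints and computing $v+|S|+1$ should reproduce each listed expression. For example, $q^{t+2}-qb-q=v+|S|+1$ with $|S|=q^{t+2}-q^{t+1}-(q-1)b-q$. For $t=0$ the forbidden pattern is a single digit $\ge q-1-b$ preceded appropriately. Here I expect to use $z$ with $z\not\equiv$ a power of $q$, together with the progression $S=\{1,\dots\}$ of step related to $\lceil q/(b+1)\rceil$, giving the product form $\min\{\lceil q/(b+1)\rceil,q-b\}(q-1-b)$. Case (2) is analogous with $v=q^t-1$ and the pattern being $t+1$ consecutive $(q-1)$'s preceded by anything, with the case split governed by whether $m\ge 2t+2$.

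I expect the main obstacle to be the verification, in each subcase, that every integer in every window $[sz,sz+v)$ avoids the forbidden cyclic pattern. The cyclic wrap-around is the delicate point: a run of $(q-1)$'s can straddle position $m-1$ and position $0$, so the high digits contributed by $s$ must be controlled simultaneously with the low digits. To handle this I will first fix, for each case, an explicit $q$-adic template for $sz$, say $\bar{sz}=0^{t+1}\,\bar{j}\,0^{r}$ with a zero guard block. I will then argue blockwise that the guard zeros break every potential run. Counting $|S|$ and checking the Roos gap inequality are then routine. If a template fails in a boundary case such as $m=2t+3$ or $m=2t+4$, I will shrink $S$ to a subset where the guard block exists; this is presumably why those cases carry weaker formulas.
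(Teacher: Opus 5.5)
Your skeleton is the paper's: Lemma~\ref{lem:linkd} for the equality of the two dual distances, then Lemma~\ref{lem:Nbound} applied to $T^\perp$ with $v$ from Lemma~\ref{lem:valueof}, $z$ a power of $q$, and $S$ a box of digit patterns. Your treatment of the $S=\emptyset$ cases (all of part~(3), and $m=t+1$ in part~(1)) coincides with the paper's Cases~6 and~11.

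\textbf{The gap.} The rest of the theorem consists precisely of exhibiting, case by case, a pair $(z,S)$ with $[sz,sz+v)\subseteq T^\perp$ for all $s\in S$ and $\max S-\min S-|S|+1<v$. You never produce one. The identity $v+|S|+1=q^{t+2}-qb-q$ only tells you what $|S|$ would have to be.

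\textbf{The template you sketch fails as written.} In part~(1), $v=q^{t+1}-1-b$ already occupies positions $0,\dots,t$. So with $sz=jq^{t+1}$ the digits of $j$ begin at position $t+1$, directly after those of $w$, and your zero block $0^r$ only protects the wrap-around. For instance, take $w=(q-1-b)q^t<v$ and any $j$ whose lowest $t$ digits equal $q-1$ (e.g.\ $j=q^t-1$). Then $jq^{t+1}+w$ has the digit $q-1-b$ at position $t$ followed by $q-1$ at positions $t+1,\dots,2t$, which is a forbidden pattern. The paper's Case~1 instead takes $z=q^{t+2}$, which leaves a zero at position $t+1$, and
\[
S=\{hq^{t+1}+k \mid (h,k)\neq(0,0),\ h\in[0,q-2],\ k\in[0,q^{t+1}-2-b]\}.
\]
Here $k<v$ keeps the $k$-block from reproducing $\bar v$, and $h\le q-2$ stops runs through the top digit of $s$. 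The threshold $m\ge 2t+5$ comes from the position count $(t+1)+1+(t+2)+1$ for $w$, a zero guard, $s$, and a second zero guard. This gives $|S|=(q-1)(q^{t+1}-1-b)-1$.

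\textbf{Smaller $m$.} When $m$ is smaller there is no room for both zero guards, and the paper uses bounded digits of $s$ as separators instead. In Cases~2--3 this is $h\in[0,q-2-b]$ sitting at position $m-1$. In Case~4 it is $z=q^{t+1}$ with $s=hq^{m-t-2}+kq+\ell$, where $\ell\le q-2$ sits at position $t+1$ and $h\le q-2-b$ at position $m-1$. Your remark about shrinking $S$ points in this direction, but these constructions and their window and gap-condition checks are the substance of the proof, and they are missing.

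\textbf{Two concrete guesses are off.}
\begin{itemize}
\item For $a=b=q-1$, the forbidden pattern of Lemma~\ref{lem:defTperp} is any $t$ (not $t+1$) consecutive digits $q-1$. This is consistent with $\bar v=(q-1)^t0^{m-t}$ for $v=q^t-1$, and there $z=q^{t+1}$ does leave a zero at position $t$.
\item For $t=0$, $T^\perp$ is simply the set of integers all of whose digits are $<q-1-b$. So $z=q$ already works, with $S=\{hq+k\mid (h,k)\neq(0,0),\ h\in[0,\alpha],\ k\in[0,q-2-b]\}$. The factor $\lceil q/(b+1)\rceil$ comes not from a step size but from the gap condition: here $\max S-\min S-|S|+1=\alpha(b+1)$ must be less than $v=q-1-b$.
\end{itemize}
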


\begin{IEEEproof}
By Lemma~\ref{lem:linkd}, we have
\[
d_{\min}\bigl( \C_T^\perp \bigr) = d_{\min}\bigl( (\C_{T}^*)^\perp \bigr).
\]
Hence it suffices to estimate the minimum distance of $\C_T^\perp$.

Lemma~\ref{lem:valueof} provides a value $v$ such that $[0,v)\subseteq T^\perp$ and $v\notin T^\perp$. According to Lemma~\ref{lem:Nbound}, a lower bound on $d$ can be obtained once we find an integer $z$ with $\gcd(z, q^m-1) = 1$ and a set $S \subseteq (0, q^m-1]$ such that, for every $s \in S$,
\begin{itemize}
    \item $[sz,\; sz + v) \subseteq T^\perp$,
    \item $\max S - \min S - |S| + 1 < v$.
\end{itemize}
Under these conditions, Lemma~\ref{lem:Nbound} yields $d \geq v + |S| + 1$.

The structure of the set $S$ depends strongly on the parameters $a$, $b$, $m$, and $t$.
For this reason, we proceed by cases.
Below we give a detailed proof only for one representative case; the remaining cases are handled similarly, and we record only the final conclusions.

\noindent \textbf{Case 1:} $a = q-1$, $b \neq q-1$, $m \geq 2t+5$, $t \geq 1$.

Let $z = q^{t+2}$ and
\[
S = \bigl\{ hq^{t+1} + k \mid (h,k) \neq (0,0), \ h \in [0, q-2], \ k \in [0, q^{t+1} - 2 - b] \bigr\}.
\]
We show that for all $s \in S$, one has $[sz, sz + v) \subseteq T^\perp$.
Equivalently, for all $w \in [0, v)$, we must prove that $sz + w \in T^\perp$.

For any $w \in [0, v)$ and $s \in S$, write
\[
\bar{w} = w_1 \cdots w_{t+1} \underbrace{0 \cdots 0}_{m-t-1},
\qquad
\bar{s} = s_1 s_2 \cdots s_{t+1} s_{t+2} \underbrace{0 \cdots 0}_{m-t-2}.
\]
Then
\[
\overline{sz + w} = w_1 \cdots w_{t+1} \, 0 \, s_1 s_2 \cdots s_{t+1} s_{t+2} \, 0 \, \underbrace{0 \cdots 0}_{m-2t-5}.
\]
We have to show that $\overline{sz + w}$ contains no cyclic subsequence of the form
\[
y \, \underbrace{q-1 \cdots q-1}_{t},
\]
with $q-1-b \leq y \leq q-1$.

By the definition of $S$, we have $s_{t+2} \leq q-2$.
If $\overline{sz + w}$ does not contain $t$ consecutive digits equal to $q-1$, then it certainly belongs to $T^{\perp}$.
Therefore, it suffices to consider the case where $\overline{sz + w}$ does contain $t$ consecutive $q-1$'s.
In that case, only the following four configurations are possible:
\begin{itemize}
    \item $\overline{sz + w} = \underbrace{q-1 \cdots q-1}_{t} w_{t+1}0 s_1s_2 \cdots s_{t+1} s_{t+2} 0 \underbrace{0 \cdots 0}_{m-2t-5}$;
    \item $\overline{sz + w} = w_1 \underbrace{q-1 \cdots q-1}_{t} 0 s_1s_2 \cdots s_{t+1} s_{t+2} 0 \underbrace{0 \cdots 0}_{m-2t-5}$;
    \item $\overline{sz + w} = w_1 \cdots w_{t+1} 0 \underbrace{q-1 \cdots q-1}_{t} s_{t+1} s_{t+2} 0 \underbrace{0 \cdots 0}_{m-2t-5}$;
    \item $\overline{sz + w} = w_1 \cdots w_{t+1} 0 s_1\underbrace{q-1 \cdots q-1}_{t} s_{t+2} 0 \underbrace{0 \cdots 0}_{m-2t-5}$.
\end{itemize}

In each of these four configurations, the digit immediately preceding the block of $t$ consecutive $q-1$'s is either one of the digits coming from $w$ or one of the digits coming from $s$.
Because $w<v=q^{t+1}-1-b$, the digits of $\bar w$ cannot produce a preceding digit in the interval $[q-1-b,q-1]$ together with such a block; similarly, by the definition of $S$, we have $s_{t+2}\le q-2$, so the block arising from the $s$-part cannot produce a forbidden pattern either. Therefore $\overline{sz+w}$ contains no cyclic subsequence of the forbidden form, and then $sz+w\in T^\perp$.

Thus $[sz,sz+v)\subseteq T^\perp$ for every $s\in S$.
A direct calculation shows that
\[
\max S - \min S - |S| + 1 < v.
\]
Hence all conditions of Lemma~\ref{lem:Nbound} are satisfied, and then
\[
d \geq v + |S| + 1 = q^{t+2}-qb-q.
\]

The remaining cases can be handled analogously. We list the resulting choices of $z$ and $S$, together with the corresponding lower bounds.

\noindent \textbf{Case 2:} $a = q-1$, $b \neq q-1$, $m = 2t+4$, $t \geq 1$.
Let $z = q^{t+2}$ and
\[
S = \bigl\{ hq^{t+1} + k \mid (h,k) \neq (0,0), \ h \in [0, q-2-b], \ k \in [0, q^{t+1} - 2 - b] \bigr\}.
\]
Then, $d \geq (q-b)(q^{t+1}-1-b)$.

\noindent \textbf{Case 3:} $a = q-1$, $b \neq q-1$, $m = 2t+3$, $t \geq 1$.
Let $z = q^{t+2}$ and
\[
S = \bigl\{ hq^{t} + k \mid (h,k) \neq (0,0), \ h \in [0, q-2-b], \ k \in [0, q^{t} - 1] \bigr\}.
\]
Then, $d \geq 2q^{t+1}-(b+1)q^t-1-b$.

\noindent \textbf{Case 4:} $a = q-1$, $b \neq q-1$, $m \in[t+3,2t+2]$, $t \geq 1$.
Let $z = q^{t+1}$ and
\[
S = \bigl\{ hq^{m-t-2} + kq+\ell \mid (h,k,\ell) \neq (0,0,0), \ h \in [0, q-2-b], \ k \in [0, q^{m-t-3} - 1], \ \ell \in[0,q-2] \bigr\}.
\]
Then, $d \geq (q-1-b)(q-1)q^{m-t-3}+q^{t+1}-1-b$.

\noindent \textbf{Case 5:} $a = q-1$, $b \neq q-1$, $m = t+2$.
Let $z = q^{t+1}$ and
\[
S = \bigl\{ h \mid  h \in [1, q-2-b] \bigr\}.
\]
Then, $d \geq q^{t+1}+q-2-2b$.

\noindent \textbf{Case 6:} $a = q-1$, $b \neq q-1$, $m = t+1$.
Let $S = \emptyset$. Then, $d \geq v+1=(q-b)q^t$.

\noindent \textbf{Case 7:} $a = q-1$, $b \neq q-1$, $m \geq 3$, $t=0$.
Let $z = q$, $\alpha=\min \left\{ q-2-b, \left\lceil \frac{q}{b+1} \right\rceil - 2\right\}$ and
\[
S = \bigl\{ hq+k \mid (h,k) \neq (0,0),\ h \in [0, \alpha],\ k\in[0,q-2-b] \bigr\}.
\]
Then, $ d \geq \min\left\{ \left\lceil \frac{q}{b+1} \right\rceil ,q-b\right\}(q-1-b)$.

\noindent \textbf{Case 8:} $a = b = q-1$, $m \geq 2t+2$, $t\geq 1$.
Let $z = q^{t+1}$ and
\[
S = \bigl\{ hq^t+k \mid (h,k) \neq (0,0),\ h \in [0, q-2],\ k\in[0,q^{t}-2] \bigr\}.
\]
Then, $d \geq q^{t+1}-q+1$.

\noindent \textbf{Case 9:} $a = b = q-1$, $m \in [t+2,2t+1]$, $t\geq 1$.
Let $z = q^t$ and
\[
S = \bigl\{ hq^{m-t-1}+qk+\ell \mid (h,k,\ell) \neq (0,0,0),\ h \in [0, q-2],\ k\in[0,q^{m-t-2}-1],\ \ell \in [0, q-2] \bigr\}.
\]
Then, $d \geq q^{m-t}-2q^{m-t-1}+q^{m-t-2}+q^t-1$.

\noindent \textbf{Case 10:} $a = b = q-1$, $m = t+1$, $t\geq 1$.
Let $z = q^t$ and
\[
S = \bigl\{ h \mid h \in [1, q-2] \bigr\}.
\]
Then, $d \geq q^t+q-2$.

\noindent \textbf{Case 11:} $a \neq q-1$.
Let $S = \emptyset$. Then,
\[
d \geq v+1 =
\begin{cases}
q^t\Bigl[(q-1-a)\sum_{i=0}^{m-t-2}q^i +1\Bigr]+q^{m-1}(q-1-b),&b\geq a;\\[4pt]
q^{t+1}\Bigl[(q-1-a)\sum_{i=0}^{m-t-2}q^i +1\Bigr]-b,&b < a.
\end{cases}
\]

Finally, note that we do not consider the case $a=b=q-1$ and $t=0$, since then $\C_T=\{\mathbf{0}\}$ is the zero code, and the problem is trivial.
This completes the proof.
\end{IEEEproof}

When
\[
\bar{u} = \overbrace{(q-1) \cdots (q-1) b \underbrace{0\cdots 0}_t}^{m},
\]
that is, when $\delta= (b+1)q^{m-t-1}$, it is easy to see that $\C_T^*$, with
\[
T^* = \left(\bigcup_{\bar{s} \ll \bar{u}} \mathrm{cl}(s)\right)\setminus \{0\},
\]
is the narrow-sense primitive BCH code of designed distance $\delta$, which we denote by $\C_{\delta}$.
Applying Theorem~\ref{thm:CTBound} yields the following result.

\begin{corollary}\label{cor:BCHBound}
Let $\C_\delta$ be a narrow-sense primitive BCH code of length $n = q^m - 1$ over $\mathbb{F}_q$ with designed distance
\[
\delta=(b+1)q^{m-t-1},
\]
where $1\leq b\leq q-1$ and $0\leq t\leq m-1$. Let $d$ be the minimum distance of $\C_{\delta}^\perp$. Then
\[
d \geq
\begin{cases}
q^{t+2}-qb-q,& b \neq q-1,\ m\geq 2t+5,\ t\geq 1;\\[2pt]
(q-b)(q^{t+1}-1-b),& b \neq q-1,\ m=2t+4,\ t\geq 1;\\[2pt]
2q^{t+1}-(b+1)q^t-1-b,& b \neq q-1,\ m=2t+3,\ t\geq 1;\\[2pt]
(q-1-b)(q-1)q^{m-t-3}+q^{t+1}-1-b,& b \neq q-1,\ m\in [t+3,2t+2],\ t\geq 1;\\[2pt]
q^{t+1}+q-2-2b,& b \neq q-1,\ m=t+2;\\[2pt]
(q-b)q^t,&b \neq q-1,\ m=t+1;\\[2pt]
\min\left\{ \left\lceil \frac{q}{b+1} \right\rceil ,q-b\right\}(q-1-b), & b \neq q-1,\ m\geq 3,\ t=0;\\[2pt]
q^{t+1}-q+1, & b=q-1,\ m \geq 2t+2; \\[2pt]
q^{m-t}-2q^{m-t-1}+q^{m-t-2}+q^t-1,& b=q-1,\ m \in [t+2,2t+1];\\[2pt]
q^t+q-2,& b=q-1,\ m=t+1.
\end{cases}
\]
\end{corollary}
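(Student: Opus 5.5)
The corollary is the specialization $a=q-1$ of Theorem~\ref{thm:CTBound}, so I would prove it by identifying the BCH code with the code $\C_T^*$ and then reading off the bounds.

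First I would fix $a=q-1$, so that $\bar u=(q-1)\cdots(q-1)\,b\,0\cdots0$ with $t$ trailing zeros. Its integer value is $u=(b+1)q^{m-t-1}-1$. Next I would check the claim from Section~\ref{sec:dimensionofcodes} that, for $s\in[0,n]$, $s\le u$ holds if and only if $\bar s\ll\bar u$. The direction ``$\Leftarrow$'' is in Definition~\ref{def:partialorder}. For ``$\Rightarrow$'': if $s\le u$, the most significant digit $s_{m-t-1}$ is at most $b$, and every digit above it is $0$. The digits below it are at most $q-1$ automatically.

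It then follows that $T=\bigcup_{s\le u}\mathrm{cl}(s)$, so $T^*=\bigcup_{1\le s<\delta}\mathrm{cl}(s)$ with $\delta=u+1=(b+1)q^{m-t-1}$. This is exactly the defining set of the narrow-sense primitive BCH code $\C_\delta$. Hence $\C_\delta=\C_T^*$, and by Lemma~\ref{lem:linkd} the dual distance $d$ of $\C_\delta$ equals that of $\C_T^\perp$.

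To finish, I would apply Theorem~\ref{thm:CTBound} with $a=q-1$. When $b\ne q-1$, part (1) gives the first seven lines of the corollary, case by case. When $b=q-1$, so $a=b=q-1$, part (2) gives the last three lines. Part (3) does not arise, since it requires $a\ne q-1$.

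Two boundary points need a remark. First, the case $b=q-1$, $t=0$ gives $\delta=q^m$, so $T$ is everything and the code is trivial, as noted at the end of the proof of Theorem~\ref{thm:CTBound}. The statement for $b=q-1$ implicitly assumes $t\ge1$, consistent with part (2). Second, the standing hypothesis $m\ge2$ from Section~\ref{sec:distanceofcodes} carries over unchanged.

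This is a direct specialization, so I expect no real obstacle. The only points to verify carefully are the equivalence $s\le u\Leftrightarrow \bar s\ll\bar u$ and the bookkeeping that matches the parameter ranges of Theorem~\ref{thm:CTBound} to the cases listed in the corollary.
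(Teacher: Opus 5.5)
Your proposal is correct and follows the paper's argument: you set $a=q-1$, check that $s\le u=(b+1)q^{m-t-1}-1$ holds if and only if $\bar s\ll\bar u$, so that $\C_T^*=\C_\delta$, and then read off parts (1) and (2) of Theorem~\ref{thm:CTBound}; your digit check fills in the step the paper calls ``easy to see.'' One small refinement: the degenerate case $b=q-1$, $t=0$ (with $m\ge 2$) does not need to be excluded, because there $\C_\delta=\{\mathbf{0}\}$ and the listed bound $q^{t+1}-q+1=1$ holds trivially.
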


\begin{table}[h!]
\centering
{\small
\setlength{\tabcolsep}{1.8mm}
\caption{Lower bounds on the minimum distance of $\mathcal{C}_{\delta}^\perp$ when $q=5$, $\delta=(b+1)q^{m-t-1}$, and $m=10$}
\label{table:LiWang}
 \begin{tabular}{ccccccc}
\toprule
$\quad t \quad$ & $b$ & $\delta$ & Corollary~\ref{cor:BCHBound} & Ref. \cite{CLi2022} & Ref. \cite{Levy1995} & Ref. \cite{Wang2024} \\
\midrule
\tabincell{c}{8} & \tabincell{c}{1\\2\\3\\4} & \tabincell{c}{10\\15\\20\\25} & \tabincell{c}{1953126\\1953124\\1953122\\390640} & \tabincell{c}{1953124\\1953123\\1953122\\390625} & \tabincell{c}{1953124\\1953123\\1953122\\390640*} & \tabincell{c}{1953125\\1953124\\1953122\\390626} \\
\midrule
\tabincell{c}{7} & \tabincell{c}{1\\2\\3\\4} & \tabincell{c}{50\\75\\100\\125} & \tabincell{c}{390635\\390630\\390625\\78204} & \tabincell{c}{390634\\390643\\390652\\78125} & \tabincell{c}{390639*\\390638*\\390637*\\78204*} & \tabincell{c}{390635\\390644\\390652\\78126} \\
\midrule
\tabincell{c}{6} & \tabincell{c}{1\\2\\3\\4} & \tabincell{c}{250\\375\\500\\625} & \tabincell{c}{78183\\78162\\78141\\16024} & \tabincell{c}{78124\\78123\\78122\\15625} & \tabincell{c}{78203*\\78202*\\78201*\\16024*} & \tabincell{c}{78125\\78124\\78122\\15626} \\
\midrule
\tabincell{c}{5} & \tabincell{c}{1\\2\\3\\4} & \tabincell{c}{1250\\1875\\2500\\3125} & \tabincell{c}{15923\\15822\\15721\\5124} & \tabincell{c}{15624\\15623\\15622\\3125} & \tabincell{c}{16023*\\16022*\\16021*\\5124*} & \tabincell{c}{15625\\15624\\15622\\3126} \\
\midrule
\tabincell{c}{4} & \tabincell{c}{1\\2\\3\\4} & \tabincell{c}{6250\\9375\\12500\\15625} & \tabincell{c}{4623\\4122\\3621\\3121} & \tabincell{c}{3124\\3123\\3122\\625} & \tabincell{c}{5123*\\5122*\\5121*\\3120} & \tabincell{c}{3125\\3124\\3122\\626} \\
\midrule
\tabincell{c}{3} & \tabincell{c}{1\\2\\3\\4} & \tabincell{c}{31250\\46875\\62500\\78125} & \tabincell{c}{2492\\1866\\1242\\621} & \tabincell{c}{624\\623\\622\\125} & \tabincell{c}{2492\\1866\\1242\\620} & \tabincell{c}{625\\624\\622\\126} \\
\midrule
\tabincell{c}{2} & \tabincell{c}{1\\2\\3\\4} & \tabincell{c}{156250\\234375\\312500\\390625} & \tabincell{c}{615\\610\\605\\121} & \tabincell{c}{124\\123\\122\\25} & \tabincell{c}{492\\366\\242\\120} & \tabincell{c}{125\\124\\122\\26} \\
\bottomrule
 \end{tabular}
 }
\end{table}

In conclusion, we have established lower bounds on the minimum distances of a family of affine-invariant codes and the related family of BCH codes. To evaluate our results, we compare the minimum distance of the dual code $\C_{\delta}^{\perp}$, where $\delta=(b+1)q^{m-t-1}$, with several known bounds; the comparison is presented in Table~\ref{table:LiWang}. The data show that, in comparison with the results of \cite{CLi2022} and \cite{Wang2024}, our lower bounds for this class of BCH codes are generally tighter, particularly when $t$ is small. Table~\ref{table:LiWang} also contains a comparison with the results claimed in Corollary~2 of \cite{Levy1995}. Our examination shows that some of those claims are incorrect, and the affected entries are marked with asterisks in the table.

%

\section{Summary and concluding remarks}\label{sec:conclusion}

This paper has advanced the study of affine-invariant codes and BCH codes by establishing new combinatorial insights into their underlying cyclotomic structure. Our main contributions can be summarized as follows:
\begin{itemize}
\item We proved a key combinatorial result (Theorem~\ref{thm:newresult}), which provides an exact closed-form expression for the size of the set consisting of all descendants of elements in a single $q$-cyclotomic coset under the partial order induced by $q$-adic expansions. This result is fundamental, since it directly enables a precise parametric analysis of the associated codes.
\item By applying this combinatorial formula, we derived explicit dimension formulas for two closely related code families: the affine-invariant code $\mathcal{C}_T$ and its cyclic counterpart $\mathcal{C}_{T}^*$ (Theorem~\ref{thm:dimCT}), as well as the narrow-sense primitive BCH codes $\mathcal{C}_\delta$ with designed distance $\delta = (b+1)q^{m-t-1}$, where $1 \leq b \leq q-1$ and $0 \leq t \leq m-1$ (Corollary~\ref{cor:dimBCH}).
\item We established improved lower bounds on the minimum distances of the dual codes of these code families (Theorem~\ref{thm:CTBound} and Corollary~\ref{cor:BCHBound}).
\end{itemize}

Our work not only deepens the structural understanding of affine-invariant codes and their duals, but also provides concrete and computable formulas that sharpen several existing bounds in the literature. The methods developed in this paper also provide a refined framework for the parametric analysis of BCH codes. Moreover, the combinatorial approach introduced here has the potential to be extended to more general families of affine-invariant codes and their corresponding cyclic codes, which will be the subject of future work.


\begin{thebibliography}{10}


\bibitem{Berger1996}
T.P. Berger,
Automorphism groups and permutation groups of affine-invariant codes,
Finite fields and applications: Proceedings of the third international conference, Glasgow,
Jul. 1995.

\bibitem{BergerCharpin1996}
T. P. Berger, P. Charpin,
The permutation group of affine-invariant extended cyclic codes,
IEEE Transactions on Information Theory,
vol. 42, no. 6, pp. 2194-2209,
Nov. 1996.

\bibitem{BergerCharpin1999}
T. P. Berger, P. Charpin,
The automorphism groups of BCH codes and of some affine-invariant codes over extension fields,
Designs, Codes and Cryptography,
vol. 18, pp. 29-53,
Dec. 1999.

\bibitem{Bernaletal2011}
J. J. Bernal, Á. del Río, J. J. Simón,
Group code structures of affine-invariant codes,
Journal of Algebra,
vol. 325, no. 1, pp. 269-281,
Jan. 2011.

\bibitem{Charpin1994}
P. Charpin, Françoise Levy-Dit-Vehel,
On self-dual affine-invariant codes,
Journal of Combinatorial Theory, Series A,
vol. 67, no. 2, pp. 223-244,
Aug. 1994.

\bibitem{Cherchem2020}
A. Cherchem, A. Jamous, H. Liu, Y. Maouche,
Some new results on dimension and Bose distance for various classes of BCH codes,
Finite Fields and Their Applications,
vol. 65, 101673,
Aug. 2020.

\bibitem{Delsarte1970}
P. Delsarte,
On cyclic codes that are invariant under the general linear group,
IEEE Transactions on Information Theory,
vol. 16, no. 6, pp. 760-769,
Nov. 1970.

\bibitem{Ding2015-1}
C. Ding,
Parameters of several classes of BCH codes,
IEEE Transactions on Information Theory,
vol. 61, no. 10, pp. 5322-5330,
Oct. 2015.

\bibitem{Ding2015-2}
C. Ding, X. Du, Z. Zhou,
The Bose and minimum distance of a class of BCH codes,
IEEE Transactions on Information Theory,
vol. 61, no. 5, pp. 2351-2356,
May 2015.

\bibitem{Ding2017}
C. Ding, C. Fan, Z. Zhou,
The dimension and minimum distance of two classes of primitive BCH codes,
Finite Fields and Their Applications,
vol. 45, pp. 237-263,
May 2017.

\bibitem{Ding2018}
C. Ding,
Designs from linear codes,
World Scientific Publishing Co., Inc., River Edge, NJ, USA,
Oct. 2018.

\bibitem{DingLiXia2018}
C. Ding, C. Li, Y. Xia,
Another generalisation of the binary Reed-Muller codes and its applications,
Finite Fields and Their Applications,
vol. 53, pp. 144-174,
Sep. 2018.

\bibitem{DingTangTonChev2020}
C. Ding, C. Tang, V. Tonchev,
Linear codes of 2-designs associated with subcodes of the ternary generalized Reed-Muller codes,
Designs, Codes and Cryptography,
vol. 88, pp. 625-641,
Dec. 2019.

\bibitem{DuWangFan2020}
X. Du, R. Wang, C. Fan,
Infinite families of 2-designs from a class of cyclic codes,
Journal of Combinatorial Designs,
vol. 28,
Nov. 2020.

\bibitem{Gan2024}
C. Gan, C. Li, H. Qian, X. Shi,
On Bose distance of a class of BCH codes with two types of designed distances,
Designs, Codes and Cryptography,
vol. 92, pp. 2031-2053,
Mar. 2024.

\bibitem{CLi2022}
B. Gong, C. Ding, C. Li,
The dual codes of several classes of BCH codes,
IEEE Transactions on Information Theory,
vol. 68, no. 2, pp. 953-964,
Feb. 2022.

\bibitem{Hou2005}
X. Hou,
Enumeration of certain affine invariant extended cyclic codes,
Journal of Combinatorial Theory,
vol. 110, no. 1, pp. 71-95,
Apr. 2005.

\bibitem{XuJiTaoHu2023}
C. Ji, Y. Xu, S. Hu,
Extended cyclic codes sandwiched between Reed-Muller codes,
2021 IEEE International Symposium on Information Theory (ISIT),
Melbourne, Australia, pp. 444-449,
Jul. 2021.

\bibitem{Kasami1967}
T. Kasami, S. Lin, W.W. Peterson,
Some results on cyclic codes which are invariant under the affine group and their applications,
Information and Control,
vol. 11, no. 5-6, pp. 475-496,
Nov. 1967.

\bibitem{ChengJuLi2025}
C. Li, C. Gan,
A class of affine-invariant codes and their related codes,
SIAM Journal on Discrete Mathematics,
vol. 39, no. 4, pp. 2067-2101,
Oct. 2025.

\bibitem{Levy1995}
F. Levy-dit-Vehel,
Bounds on the minimum distance of the duals of extended BCH codes over \(\mathbb{F}_p\),
Applicable Algebra in Engineering, Communication and Computing,
vol. 6, pp. 175-190,
May 1995.

\bibitem{Liu2017}
H. Liu, C. Ding, C. Li,
Dimensions of three types of BCH codes over GF(q),
Discrete Mathematics,
vol. 340, no. 8, pp. 1910-1927,
Aug. 2017.

\bibitem{Mann1962}
H. B. Mann,
On the number of information symbols in Bose-Chaudhuri codes,
Information and Control,
vol. 5, no. 2, pp. 153-162,
Jun. 1962.

\bibitem{Moulou2023}
E. M. Mouloua, M. Najmeddine,
On the parameters of a class of narrow sense primitive BCH codes,
Journal of Algebra Combinatorics Discrete Structures and Applications,
vol. 10, no. 3,
Aug. 2023.

\bibitem{Roos1983}
C. Roos,
A new lower bound for the minimum distance of a cyclic code,
IEEE Transactions on Information Theory,
vol. 29, no. 3, pp. 330-332,
May 1983.

\bibitem{Wang2024}
X. Wang, C. Xiao, D. Zheng,
The duals of narrow-sense BCH codes with length $\frac{q^m-1}{\lambda}$,
IEEE Transactions on Information Theory,
vol. 70, no. 11, pp. 7777-7789,
Nov. 2024.

\bibitem{XuTaoHu2025}
Y. Xu, R. Tao, S. Hu,
New classes of affine-invariant codes sandwiched between Reed-Muller codes,
Finite Fields and Their Applications,
vol. 92,
Dec. 2023.

\bibitem{Yue1996}
D. Yue, Z. Hu,
On the dimension and minimum distance of BCH codes over GF(q),
Journal of Electronics (China),
vol. 13, pp. 216-221,
Jul. 1996.

\bibitem{RunZheng}
R. Zheng, N.-S. Sze, Z. Huang,
The dimension and Bose distance of certain primitive BCH codes,
IEEE Transactions on Information Theory,
vol. 71, no. 10, pp. 7670-7687,
Oct. 2025.


\end{thebibliography}
\end{document}